\newcommand*{\lmulti}{\{\mskip-5mu\{}
\newcommand*{\rmulti}{\}\mskip-5mu\}}
\newcommand{\V}{\mathcal{V}}
\newcommand{\E}{\mathcal{E}}
\newcommand{\T}{\mathcal{T}}
\newcommand{\N}{\mathcal{N}}
\definecolor{ForestGreen}{RGB}{34,139,34}
\definecolor{DarkGreen}{RGB}{14,79,14}
\definecolor{Blue}{RGB}{0,0,255}
\definecolor{Red}{RGB}{255,0,0}
\newcommand{\matr}[1]{\mathbf{#1}}
\Crefname{procedure}{Procedure}{Procedures}
\Crefname{conjecture}{Conjecture}{Conjectures}
\Crefname{definition}{Definition}{Definitions}
\Crefname{Definition}{Definition}{Definitions}
\Crefname{Theorem}{Theorem}{Theorems}
\Crefname{@theorem}{Theorem}{Theorems}
\newtheorem{conjecture}{Conjecture}[section]
\def\BibTeX{{\rm B\kern-.05em{\sc i\kern-.025em b}\kern-.08em
    T\kern-.1667em\lower.7ex\hbox{E}\kern-.125emX}}
\begin{document}

\title{Efficient Sampling of Temporal Networks with Preserved Causality Structure}
%
%
%
\author{Felix I. Stamm\thanks{RWTH Aachen University}
\and Mehdi Naima\thanks{Sorbonne Université, CNRS, LIP6, F-75005 Paris, France. This project has received financial support from the CNRS through the MITI interdisciplinary programs.} \and Michael T. Schaub\footnotemark[1]}
\date{}

%

\maketitle 

\begin{abstract}

In this paper, we extend the classical Color Refinement algorithm for static networks to temporal (undirected and directed) networks.
This enables us to design an algorithm to sample synthetic networks that preserves the $d$-hop neighborhood structure of a given temporal network.
The higher $d$ is chosen, the better the temporal neighborhood structure of the original network is preserved.
Specifically, we provide efficient algorithms that preserve time-respecting ("causal") paths in the networks up to length $d$, and scale to real-world network sizes.
We validate our approach theoretically (for Degree and Katz centrality) and experimentally (for edge persistence, causal triangles, and burstiness).
An experimental comparison shows that our method retains these key temporal characteristics more effectively than existing randomization methods.
 
\end{abstract}

\section{Introduction}
Graphs are a powerful framework for modeling complex systems across a multitude of fields.
However, in many systems, including communication networks, social interactions, financial systems, and biological processes the interaction patterns between entities are inherently dynamic, i.e., the graph structure changes over time.
To understand processes occurring on such networks, the temporal ordering of the interactions often plays a major role.
For example, the spread of pathogens or information can only occur within the graph along a temporal path whose edges occur sequentially in time, i.e. along a \emph{time-respecting walk}. 
Such temporal interaction sequences can be described by
temporal networks, which are essentially graphs with time-varying edge-sets.

To deepen our understanding of the underlying dynamical systems, it is crucial to analyze temporal networks
in ways that respect their core temporal structures.
In particular, randomizing these networks while maintaining time-respecting walks is essential for preserving the integrity of temporal causality and information flow, which are fundamental to the system's dynamics.
As highlighted in seminal works \cite{lambiotte2019,holme2012temporal,kempe}, time-respecting paths capture the ordered sequence of interactions that underpin real-world processes, i.e., a potential cause must temporally precede its result.
Failing to preserve this \emph{causality structure} would distort the temporal coherence, undermining the meaningful study of phenomena such as diffusion, communication, or influence within evolving systems.



To study the properties of real-world temporal networks, it is essential to have synthetic temporal network models that can preserve essential features of the observed temporal networks, while randomizing their structure otherwise. 
Samples drawn from such models can then be used as \textit{null models}. However, existing synthetic models for temporal networks, have so far mostly focused on preserving the degree distribution, as well as local structural elements like community structure, motifs or clustering coefficients. Models that aim to preserve the \textit{causal tree} (temporal neighborhood structure of the nodes) --- which is essential for many processes --- have so far received virtually no attention.

Generalizing ideas from~\cite{stamm2023}, in this paper we address this gap. 
We design an algorithm to efficiently generate synthetic network samples that (a) resemble the mesoscale path structure of a given temporal network up to a certain depth $d$, and are (b) maximally random otherwise. To this end, we first introduce the notion of temporal color refinement, which identifies nodes with their \textit{causal trees} that encodes the structure of the graph around node $v$, up to $d$ steps away.

We then introduce two graph rewiring mechanisms, one for directed and one for undirected temporal networks. 
These rewiring mechanisms guarantee that the refinement colors up to a specific depth are preserved (i.e., the temporal neighborhood structure remains invariant) even after rewiring the network.
Using a Markov Chain Monte Carlo (MCMC) sampling algorithm that employs these rewirings, we can thus obtain randomized synthetic graphs that are identically colored as a given input graph.

\noindent{\textbf{Contributions}}
We present a method that can \emph{efficiently generate synthetic networks} that mimic a given temporal network.
To that end, we first generalize the popular color refinement procedure to temporal graphs. We then show that it is possible to efficiently sample uniformly at random from all temporal graphs that are identically colored by temporal color refinement. 
Our sampling algorithm allows for efficient randomization of large temporal graphs since it is quasi-linear in all its parameters. 
We demonstrate empirically, that the sampled graphs are similar to the original graph with respect to many relevant graph measures.
We further proof that using the ultimate temporal refinement colors, the original and the sampled graph not only have the same degrees of all temporal nodes but also the same number of walks of any length starting from any temporal node which is known as temporal Katz centrality.

\begin{figure*}
    \centering
    \includegraphics[scale=0.95]{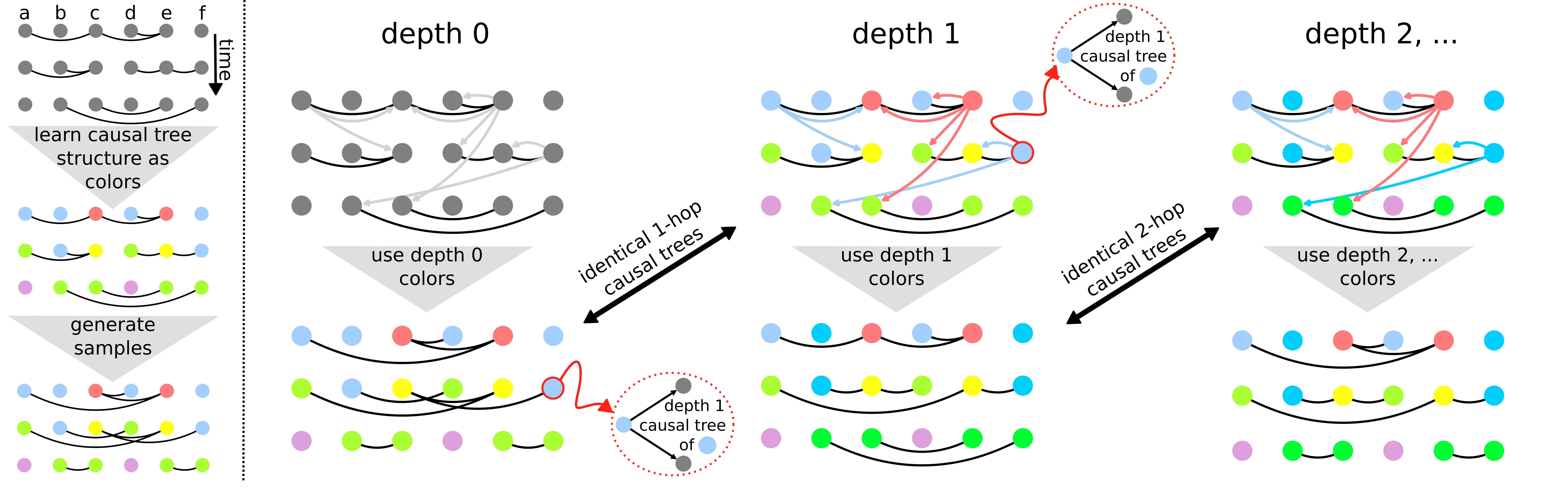}
    
    \caption{Left: High level overview of our procedure.
    Given an input temporal graph (here undirected, with six nodes and three times) we learn the causal tree structure (see \Cref{sec:app_def}) through temporal color refinement.
    This tree structure is then replicated in sampled temporal graphs.
    Right:
    Details of the proposed method.
    The top three graphs show the input graph colored at different depths $d$ of temporal color refinement.
    The refinement colors at depth $d$ distinguish nodes, if they have a different causal (successor) tree of depth $d$.
    The 1-hop causal successors of node $v$ at time $t$ are those nodes that may be reached by taking an edge from $v$ at any time later than at least $t$.
    As an example we indicate the causal successors of nodes $a$ and $e$ at time 1 and node $f$ at time 2 with small arrows.
    When using the depth $d$ colors as input to our procedure the ($d\,$+1)-hop causal trees of sampled graphs (bottom) are identical to the ($d\,$+1)-hop trees of the original graph.
    By generating temporal graphs that preserve the causal tree structure many properties of the original and sampled graphs are well preserved.
   As shown in this work, the temporal refinement colors and the sampling can be done in (log-)linear time and space making the proposed ``temporal NeSt'' procedure an efficient way to generate synthetic temporal graphs.
%
     }
    \label{fig:wl}
\end{figure*}

\subsection{Related work}
\subsection*{Temporal network models}

In this section we provide a brief overview of modeling approaches for temporal networks.
We separate three different branches of network models with slightly different goals.
Preservation driven models usually aim to preserve a certain network property such as degree or number of events per time while randomizing everything else as much as possible.
In latent variable models, explicit assumptions are made about how certain latent variables drive the time evolution of connections.
Synthetic graphs can then be generated by fixing the latent parameters (by fitting them to the data or otherwise) and sampling networks according to the underlying model assumptions.
Lastly, deep learning based models use machine learning to learn the time evolution of connections.

\noindent{\textbf{Preservation driven models}}
Gauvin et. al.~\cite{gauvinTempReview2022} consider generating networks by drawing uniformly at random from the set of temporal graphs that preserve a set of features of a given graph.
Other works, e.g. \cite{holmeEpidemiologicallyOptimalStatic2013}, fix the desired time-aggregated 
network and then assign times to the edges in this static network to obtain a temporal network.

\noindent{\textbf{Latent variable models}}
Zhang et. al.~\cite{zhangRandomGraphModels2017} generalize static network models, like the Erd\H{o}s R\'enyi, Chung-Lu, and the degree corrected stochastic blockmodel to the temporal setting by considering a continuous time evolution of the presence/absence of edges.
Mandjes et. al.~\cite{mandjesDynamicErdosRenyiGraphs2019} generalize this idea of edges appearing and disappearing in time at certain rates, by allowing the rate of changes to be themselves time dependent.
Liu and Sarıyüce~\cite{liuUsingMotifTransitions2023} generate temporal networks using motif transitions, i.e., larger temporal motifs are grown from smaller temporal motifs.
Porter et. al.~\cite{porterAnalyticalModelsMotifs2022} propose the TASBM which posits that the time between two subsequent edges connecting the same pair of nodes is exponentially distributed with a parameter that is proportional to the product of in-/out-activities of involved nodes.
Modeling temporal networks based on node activity is an approach that is also pursued by Perra et. al.~\cite{perraActivityDrivenModeling2012}.
More elaborate random graph models like the dynamic stochastic blockmodel~\cite{ghasemianDetectabilityThresholdsOptimal2016,xuDynamicStochasticBlockmodels2014} may be used when sufficient edges are present in each time slice.

\noindent{\textbf{Machine learning based models}}
Machine learning (deep learning) based approaches have been used to generate synthetic temporal graphs~\cite{guptaSurveyTemporalGraph2022}. But time and space requirements of proposed methods may hinder application to large datasets~\cite{liuUsingMotifTransitions2023}.
Specifically, to fit such models a large corpus of temporal networks is typically required. 
In contrast, we consider here a scenario in which we are given a single temporal network which we aim to randomize while preserving its temporal path structure.



\section{Formalism}\label{sec:form}
Commonly, temporal networks are used to model interaction between entities as a function of time.
While the duration of the interaction is relevant for some systems, in this work we only consider temporal networks for which the duration of the interaction is of no interest.
This is frequently the case when the duration of the interaction is short compared to the time until the next interaction.
The events of the temporal network can thus be well described by a triple consisting of the two actors involved and the time of the interaction.
Typically, there are only few events happening simultaneously.

Mathematically, we follow the formalism of \cite{grindrod2011}. 
We define a directed temporal graph $G$
as a triple $G = (\V, \E, \T)$ such that $\V$ is the set of vertices, $\E \subseteq \{ (u,v ) \,|\, u,v \in V, u \neq v \} \times \mathbb{N}$ is the set of temporal edges (transitions) and $\mathcal{T} = (t_1, t_2, \dots, t_T)$ is the sequence of strictly increasing timesteps with $\{t_i \in \T\} = \{t \, | \, ((u,v),t)\in \E\} $.
Throughout this work, we assume $t_1 \leq t_2 \leq \dots t_T$. 
The temporal edge $(( u,v), t) \in \E$ represents a directed temporal edge from $u$ to $v$ at time $t$.
We call a temporal graph undirected if every temporal interaction is bidirectional, i.e., $((u,v), t)\in \E \Leftrightarrow ((v,u), t)\in \E$.
We denote the number of vertices, edges and timestamps by $V := |\V|$, $E := |\E|$, $T := |\T|$, respectively.
We call $\V \times \T$ the set of temporal nodes.
In the top left corner of \Cref{fig:wl} a temporal graph with $\mathcal{V} = \{ a,b,c,d,e,f\}$ and $\T = ( 1,2,3)$ is shown.
\begin{Definition}[Temporal walk]\label{def:temp}
  Given a temporal graph $G = (\V, \E, \T)$, a temporal walk $W$ is a sequence of transitions $e \in \E^k$ with $k \in \mathbb{N}$, where $e = (e_1,\dots,e_k)$, with $e_i = ((u_i,v_i),t_i)$ such that for each $ 1 \leq i \leq k-1$ we have $v_i = u_{i+1}$  and $t_i \leq t_{i+1}$.
  The \textbf{length} of a temporal walk is its number of transitions. 
\end{Definition}
In the perspective of temporal graphs a \textit{causal} walk is a time-respecting walk. Therefore, causal and temporal walks refer the same concept.

\begin{Definition}[Successors of temporal node]
Given a temporal graph $G$. The (set of) successors of the temporal node $(v,t)$ is defined by:
\[ \mathcal S_G(v,t) = \{ (w,t') \; | \; ((v,w),t') \in \E, t \leq t' \}.\label{def::successor}\]
\end{Definition}

For instance, on the graph of Figure~\ref{fig:wl} the successors of node $e$ at time $1$ are $\mathcal{S}_G(e,1) = \{ (d,1),(c,1),(d,2),(c,3) \}$ (cf. the red arrows drawn for node $e$ in the column depth $d=1$ in \Cref{fig:wl}).

\begin{Definition}[Active and Sending nodes] We call a temporal node $(v,t)$ \emph{active} if there is some edge from or to node $v$ at time $t$.
    We call a temporal node $(v,t)$ \emph{sending} if there is a node $u$ such that $((v,u),t)\in \E$.
\end{Definition}
 As an example consider the input graph of \Cref{fig:wl}, the temporal nodes $(a,1)$ and $(a,2)$ are active, while $(a,3)$ is not active.
 We denote the static snapshot of $G$ at time $t$ by $G^{[t]} = (\V,\E^{[t]})$ with the set of edges present at $t$ being $\E^{[t]} = \{ (u,v)\, | \, ((u,v),t) \in \E\}$, and adjacency at time $t$ is denoted $A^{[t]}$.

Let us denote by $\mathcal W_{ij}(k)$ the set of all temporal walks from $i$ to $j$ of length $k$. Then, 
 \begin{Definition}[Temporal Katz centrality]
\label{def:katz}
    The Temporal Katz centrality of node $i$ is given by~\cite{grindrod2011}:
    \[ \Gamma_{\text{t-katz}}(i)  = \sum\limits_{j = 1}^n \sum\limits_{k \geq 0} \alpha^k\, |\mathcal W_{ij}(k)|,\]
    where $\alpha$ is a downweighting parameter.
\end{Definition}

Let $t_{i_1},t_{i_2},\dots,t_{i_\ell}$ be an increasing sequence of timestamps and defining
\[A =  A^{[t_{i_1}]} A^{[t_{i_2}]} \dots A^{[t_{i_\ell}]},\]
it can be shown~\cite{grindrod2011} that $(A)_{ij}$ counts the number of temporal walks of length $\ell$ from $i$ to $j$ that use exactly times $t_{i_1}, t_{i_2}, \dots t_{i_\ell}$. 
Therefore, to generalize Katz centrality to a temporal setting the authors of \cite{grindrod2011} sum all products of the form:
    $\alpha^{\ell}\,A^{[t_{i_1}]} A^{[t_{i_2}]} \dots A^{[t_{\ell}]}$. This results in defining
\begin{equation*}
    Q = (\mathit{I} - \alpha A^{[t_1]})^{-1} (\mathit{I} - \alpha A^{[t_2]})^{-1}\dots (\mathit{I} - \alpha A^{[t_T]})^{-1},
\end{equation*}
where $\mathit{I}$ is the identity matrix of order $V$.
The temporal Katz centrality can be written in terms of $Q$ as:
\begin{equation}\label{eq:katz}
    \Gamma_{\text{t-katz}} = Q \,\mathbb{1}_{V},
\end{equation}
where $\mathbb{1}_{V}$ is the all ones column vector of size $V$.
This is a ``send" centrality, i.e. it measures how well a node is able to reach other nodes in the network.
\section{Temporal Color Refinement: Efficiently learning neighborhood structure}\label{sec:algo}

Color refinement~\cite{wl} is an efficient procedure that is part of essentially all modern (static) graph isomorphism algorithms.
At its core, color refinement assigns each node of the graph a color and two nodes are assigned the same color if they are structurally similar. Other equivalence relationships between nodes in graphs exist such that \textit{regular equivalence}~\cite{regular_eq}.

In this section we generalize the color refinement procedure to temporal graphs and present an algorithm that can compute the temporal color refinement for all active temporal nodes of a graph in $O(d\, E\log E)$ where $d$ is the number of iterations needed until convergence.
This number is usually small for real world static graphs~\cite{stamm2023} or temporal graphs, as we will see in our experiments in Section~\ref{sec:experiments}.

In the temporal graphs literature, the static color refinement procedure was used when mining temporal motifs~\cite{kovanenTemporalMotifs2013}, but there it was applied to small, essentially static (sub-)graphs only.
In this work we extend the color refinement procedure to temporal graphs in order to learn the causal successor structure of the entire graph. Up to our knowledge, \emph{this is the first attempt to fully leverage the color refinement procedure to temporal graphs}.


\subsection{Temporal color refinement}

To generalise the color refinement from static to temporal graphs we consider a hash-based version of the color refinement procedure.
By replacing the normal neighborhoods of a node by their temporal (causal) successors (\cref{def::successor}) we arrive at the following recursion:
\begin{align}
     c^{(d+\!1)}_G\!(v,\!t)\! :=\! \mathrm{hash}\!\!\left(\!c^{(d)}_G\!(v,\!t), \lmulti c^{(d)}_G\!(w,\!\hat t)\, | \, (w,\!\hat t) \in \mathcal S_G(v,\!t) \rmulti\!\! \right)
     \label{eq:temp_WL}
\end{align}
which is a direct generalization of the equation for static graphs.
In the theoretical part of this paper we assume that the hash functions used are perfect, i.e. there are no hash collisions.
For the experiments the later mentioned practical hash function was used and no hash collisions were detected.
Mathematically, we may define the causal-tree (out-unraveling) of temporal node $(v,t)$ of depth $d$ for temporal graph $G$ as a directed tree $U_{\text{out}}^{(d)}(v,t) := (\V_G^d(v,t), \E_G^d(v,t))$.
To shorten the notation, in the following we write $\hat v:= (v,t)$ and $\hat u:= (u,t')$ to refer to the temporal node.
Consequently, we write $\V_G^d(\hat v):=\V_G^d(v,t)$ which applies similarly to the edges $\E_G^d$ and the successors $S_G$.

with nodes
$$
\V_G^d(\hat{v}) := \begin{cases}
    \{\hat{v}\} & d = 0,\\
    \bigcup\limits_{\hat{u} \in \mathcal S_G(\hat{v})}\{\hat {v} \frown x \mid x \in \V^{d-1}(\hat{u})\} & \text{otherwise.}
\end{cases}
$$
where $\frown$ denotes tuple concatenation.
We call the temporal node $\hat v$ the root of the causal-tree.
The edges of the causal-tree are
\begin{align*}
    \E^d(\hat{v}):= 
\bigcup_{\hat{u} \in \mathcal S_G(\hat{v})} \left\{(\hat v\frown a,\hat v\frown b) \mid (a,b) \in \E^{d-1}(\hat{u})\right\} \\ \cup \left\{(\hat{v}, \hat{v} \frown \hat{u})\right\}.
\end{align*}
Given a temporal graph $G = (\V,\E,\T)$ and a vector of initial colorings $c_G^{(0)}$ for the temporal nodes of $G$, this procedure assigns at each depth $d$ a hash (a ``color") to every temporal node.
We note that the aggregation described in \cref{eq:temp_WL} is happening backwards in time. That is, to obtain the colors of a temporal node at the next depth $d+1$ we are aggregating the colors of all nodes that are 1-hop reachable from the same node at future times.
As usual, this color assignment procedure refines the colors, i.e. 
\begin{align}
    c^{(d+1)}_G(v,t) = c^{(d+1)}_G(v,t) \Rightarrow c^{(d)}_G(v,t) = c^{(d)}_G(v,t).
    \label{eq:colors_refine}
\end{align}
Lets consider the implications of this equation.
Initially, all temporal nodes are in classes based on their initial colors (typically we start with a single class but the entire procedure works just as well when nodes are initially colored by the user). 
Before convergence is reached, each round of color refinement splits at least one of the classes. Hence, some temporal nodes that were in one class at depth $d$ are no longer in the same class at depth $d+1$.
The procedure converges when no color class is split from one round to the next.
The obtained color assignment of nodes to classes is then called the stable coloring of the temporal nodes.
See Figure~\ref{fig:wl} for an example of this coloring procedure.

\noindent{\textbf{Properties of temporal color refinement}}
Let us highlight some properties of temporal color refinement that are not present in color refinement for static graphs.
First, the successors of temporal nodes $(v,t_i)$ over increasing times fulfill
\begin{align}
    \mathcal S_G(v,t_1)\supseteq \mathcal S_G(v,t_2)\supseteq\dots \supseteq \mathcal S_G(v,t_T)
    \label{eq::nested _neighborhoods}
\end{align}
where $t_1\leq t_2\leq \dots \leq t_T$.
As the neighborhoods of earlier times contain potentially many nodes they are easily distinguished and nodes at early times are likely colored identical.
This is a consequence of the infinite look-ahead of nodes and may be mitigated, see the discussion in \cref{sec::conclusion}.
In particular if a temporal node $(v,t_i)$ is non-sending at time $t_i$ (i.e. there are no edges outgoing from $v$ at time $t_i$), then the successors are the same and the temporal nodes are colored identically if their initial color agrees:
\begin{align}
    c^{(d+1)}_G (v, t_i) = c^{(d+1)}_G (v, t_{i+1}) \Leftrightarrow c^{(0)}(v,t_i) = c^{(0)}(v,t_{i+1})\label{eq::refinement_completion}.
\end{align}
For uniform initial colors this implies that many consecutive temporal nodes have the same color. 
Unless stated otherwise we assume in the following that the initial colors $c^{(0)}$ are uniform.
Accordingly, in our statements, we will usually omit the explicit dependency on the (uniform) initial colors. However, most considerations presented still apply for non-uniform initial colors.
When computing the temporal color refinement of a network we are often mainly interested in the colors of active temporal nodes (temporal nodes $(v,t)$ having some activity at $t$). The colors of other temporal nodes may then be filled subsequently using \cref{eq::refinement_completion}.
As a consequence of \cref{eq::refinement_completion} there are at most $2E$ many colors needed to color the entire temporal graph, as there are at most $2E$ many active nodes.
As a consequence there are at most $2 E$ many rounds until stable colors are reached which is usually much less than the naive assumption that $V\,T$ many rounds are required.

\noindent\textbf{Temporal graphs with identical neighborhood structure}
We denote by $\N_G^d$ the set of all temporal graphs whose temporal nodes have the same colors after $d$ iterations  when starting from a uniform coloring, i.e. $c^{(0)} = \mathbb{1}_{n} = (1,1,\dots,1) \in \mathbb{R}^n$.
Note that $\N_G^1$ generalizes fixed degree graphs (configuration models) to a temporal setting.
For instance, when $G$ is a static graph ($T = 1$), we recover graphs with fixed degree sequence.

\subsection{Efficiently computing temporal refinement colors for active nodes}

Computing the refinement colors on \emph{static graphs} with $V$ nodes and $E$ edges by applying the equivalent of \cref{eq:temp_WL} for $d$ rounds will result in runtime $O(E \, d + V)$.
To converge to stable colors, $d$ may indeed become as large as $E$ --- although on real world graphs it is usually smaller than $E$.
With a better algorithm the colors can be computed in $O(E\log V+V)$ for any $d$~\cite{cardon1982,mckay1981}.
Both algorithms have space requirements of $O(V)$.

We may naively apply the above (faster) algorithm to compute the colors of all temporal nodes of a \emph{temporal network}. 
This would result in runtime complexity $O(E\, T\log(V\,T)+V\,T)$ and space requirements of $O(V\,T)$ as there are $V\, T$ temporal nodes in a temporal graph and each temporal edge $((u,v),t)$ affects all temporal nodes $(u,t')$ with $t'\leq t$ resulting in the factor $E\, T$.
Here, we propose Algorithm~\ref{alg:tcolor_refinemt} to compute the temporal color refinement of all \emph{active} nodes in $O(d\,E \log E)$ time and $O(V+E)$ space.
The colors of the active nodes are sufficient to obtain the colors for all temporal nodes as the remaining nodes can be colored using \cref{eq::refinement_completion}.
Computing the color refinement for only the active nodes already improves the runtime complexity of the naive algorithm to $O(E T\log(E)+E)$ and the space requirement is reduced to $O(E)$.
While these space requirements are acceptable, the runtime complexity still suffers from the $E T$ factor.
Our algorithm reduces this factor to $d\,E$, and obtains overall complexity of $O(d\, E \log E)$, where $d$ may be set by the user or is the number of rounds until the temporal color refinement algorithm converged. 
In theory $d$ may be as large as $E$, in practice $d$ is usually small.

\begin{algorithm2e}[t]
 \KwIn{\small{A Graph $G = (\V,\E,\T)$, number of rounds~$d$}}
 \KwOut{$\mathrm{colors}[(v,\!t)] \!=\! c^{(d)}(v,\!t)\, \forall\,$active nodes$\,(v,\!t)$}
 $h = \{ i \, : \, \mathrm{random\_hash}() \;| \;\;  i \in \{0,\dots,E\} \}$\\
 $ \mathrm{cs\_hash} = \{  (v,t)\, :\, 0 \;| \;\; \forall (v,t) \in \text{active nodes} \}$\\
 $ \mathrm{colors} \;\; = \;  \{  (v,t)\, :\, 0 \;| \;\; \forall (v,t) \in \text{active nodes} \}$\\
 $\mathrm{current} = 0,\, d' = 0 $\\
 \Do{$\mathrm{prev} \neq \mathrm{current} \text{ and } \;  d' < d$}{

    \tcp{Compute new hashes}
    \For{edge $((u, v), t) \in \E$}{
    $\mathrm{cs\_hash}[(u,t)] \text{ += } h[\mathrm{colors}[(v,t)]]$
    }
    \For{active $(v,t)$ in decreasing order of times}{
    \If{$t \neq n_v(t)$}{
        $\mathrm{cs\_hash}[(v,t)] \text{ += } \mathrm{cs\_hash}[(v,n_v(t))]$
    }
    }
    \tcp{Assign new colors}
    $\mathrm{last\_hash} = 0$\\
    $\mathrm{prev} = \mathrm{current};\;\; \mathrm{current} = -1$\\
    \For{$(v, t)$ in increasing order of $\mathrm{cs\_hash}$}{
        \If{$\mathrm{cs\_hash}[(v,t)]\neq \mathrm{last\_hash} $}{
        $\mathrm{last\_hash} = \mathrm{cs\_hash}[(v,t)]$\\
        $\mathrm{cs\_hash}[(v,t)] = 0; \;\; \mathrm{current} \text{ += } 1$
        }
        $\mathrm{colors}[(v,t)] = \mathrm{current} $
    }
    $d' \text{ += } 1$
 }
 
 \Return $\mathrm{colors}$
 \caption{Color refinement of active nodes}%
 \label{alg:tcolor_refinemt}
\end{algorithm2e}

The main observation used to move from quadratic to linear dependence on the number of edges is that we can compute a hash, that identifies (up to unwanted hash collisions) the color-multiset of the successors of all nodes in $O(E)$.
To achieve this we use an incremental multiset hash function~\cite{clarke2003incremental}.
The efficient use of the incremental multiset hash function is possible as the successors of temporal nodes $(v,t_i)$ are nested as per \Cref{eq::nested _neighborhoods}.


The main loop of \Cref{alg:tcolor_refinemt} has two phases.
In the first phase the multi-set hash values are calculated. In the second phase nodes with identical hash values are assigned identical colors.
To compute the hash values, we start with $E$ random numbers that serve as our hash values.
They should be chosen large enough and stored in array $h$ (in practice we use 64-bit hashes).
Think of array $h$ as assigning the hash $h[i]$ to a node with color $i$.
To compute the multi-set hash values according to \cref{eq:temp_WL} we use the additive multi-set hash introduced in \cite{clarke2003incremental}, where a multi-set hash is obtained by summing up hash values of individual elements (here the $h[i]$ values).
In our algorithm, we first sum up the hashes of successor nodes only within each time slice (lines 6-7).
The multi-set hash corresponding the full set of successors is obtained by cumulatively summing up the hashes backwards in time (lines 8-10, they are summed up for each node separately).
Throughout this hash calculation phase we make sure to properly deal with overflows as rollovers.

The $\mathrm{cs\_hash}$ values are then sorted in $O(E \log E)$ (line 13) and nodes are assigned new colors based on whether their hash values agree or not (lines 11-17).
Here we also reset the cs\_hash values to zeros to make them reusable in the next iteration.
This process of computing hash values and distinguishing nodes based on their colors needs to be repeated $d$ times or until stable refinement colors are reached (line 19).
Overall this results in $O(d \, E \log E)$ runtime.



\section{Generating synthetic temporal graphs with preserved neighborhood structure}\label{sec:theory}
In this section we show that it is possible to efficiently sample synthetic temporal graphs from the set $\N_G^d$ of all temporal graphs which are identically colored by the temporal color refinement procedure.
This means, we can efficiently generate synthetic networks which preserve the temporal neighborhood structure of a given temporal graph. 
To generate synthetic networks we perform Markov Chain Monte-Carlo (MCMC) sampling similar to the switch-chain of the configuration model~\cite{artzy2005generating,erdos2019mixing} or the NeSt model for static graphs~\cite{stamm2023}.
For the configuration model other sampling strategies like the curveball chain~\cite{carstens2015proof} or Las Vegas sampling methods~\cite{fosdick2018,greenhill2022} are known and might also be adaptable to our setting.
The key ingredient of our MCMC approach is a set of easy to perform rewirings. Multiple of these rewirings are then applied to each time slice graph $G^{[t]}$ resulting in another randomized graph. 
For a temporal graph $G = (\V,\E,\T)$ and $c$ some coloring of $G$ let us introduce the moves:
\begin{Definition}[Temporal swaps]
    If $G$ is undirected, a \textbf{swap} of two edges
     $(\{x,y\}, t)$, and $(\{r,s\}, t')$ consists in replacing them with $(\{x,s\}, t)$, $(\{r,y\}, t')$ if $c(x,t)=c(r,t')$, $c(y,t)=c(s,t')$, $x\neq s$, $r\neq y$ and $(\{x,s\}, t)$, $(\{r,y\}, t') \notin \E$.
    \label{def:swaps}
\end{Definition}

\begin{Definition}[Temporal tilts]
    If $G$ is directed, a \textbf{tilt} of an edge $((x,y), t)$ consists of replacing that edge with $((x,u),t')$ if $u \neq x$, $c(y,t)=c(u,t')$ and $((x,u), t') \notin \E$.
    \label{def:tilts}
\end{Definition}

A \textbf{swap} (respectively \textbf{tilt}) is called \textbf{in-time swap} (respectively \textbf{in-time tilt}) if $t = t'$.
Swaps and tilts are also referred to as \emph{rewirings}.
Interestingly, we find that only \emph{in-time rewirings} are needed to reach all temporal graphs having the same colors.
Given a temporal graph and fixed depth, Algorithm~\ref{alg:tnest} allows to sample (approximately) uniformly at random from all temporal graphs that are identically colored. The detailed rewiring procedures are presented in \cref{alg:undirnest} and \cref{alg:dirnest} that can be found in \Cref{sec:app_alg}. We assert the correctness of the Algorithm with the following theorems.

\begin{algorithm2e}[t]
 \KwIn{A Temporal Graph $G$, depth $d$, number of rewirings~$r$}
 \KwOut{Sample Graph $\tilde G$ approximately uniform from $\operatorname{t-}\mathcal N_G^d$}
 Use \Cref{alg:tcolor_refinemt} to obtain depth $d-1$ colors $c^{(d-1)}$ for all active temporal nodes\\
 \For{\text{all times} $t \in T$}{
    \tcc{rewire time slice $G^{[t]}$ using appropriate NeSt procedure}
    \uIf{$G$ \text{is undirected}}{
        $\tilde G^{[t]}$=undirNeSt($G^{[t]}$, $c^{(d-1)}(:, t)$, $r$)
    }
    \Else{
        $\tilde G^{[t]}$=dirNeSt($G^{[t]}$, $c^{(d-1)}$, $r$, $t$)
    }
    
 }
 return $\tilde G = \bigcup_{t \in T} \tilde G^{[t]}$%
 \caption{Sampling from temporal NeSt.
 }%
 \label{alg:tnest}
\end{algorithm2e}

\begin{theorem}[Output of Algorithm~\ref{alg:tnest}] \label{thm:rewiring_procedure}
Let $O(G,d)$ denote the set of all graphs reachable by applying multiple in-time color respecting rewirings, starting from $G$, and using depth $d$ colors obtained by applying the temporal WL procedure \cref{eq:temp_WL}.
This rewiring procedure, the ``temporal NeSt'' (short: t-NeSt) procedure, reaches exactly those graphs that are identically colored by the temporal WL procedure, i.e.
       \[ O(G,d) = \mathcal {N}^d_G.\]
\end{theorem}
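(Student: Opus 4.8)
## Proof Proposal

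The plan is to prove the set equality $O(G,d) = \mathcal{N}^d_G$ by establishing the two inclusions separately, where each direction requires a different type of argument. The containment $O(G,d) \subseteq \mathcal{N}^d_G$ is the ``soundness'' direction: I would show that a single in-time color-respecting rewiring (swap or tilt, as in \Cref{def:swaps} and \Cref{def:tilts}) preserves all refinement colors up to depth $d$, and then conclude by induction on the number of rewirings applied. The reverse containment $\mathcal{N}^d_G \subseteq O(G,d)$ is the ``completeness'' (connectivity) direction: given any two temporal graphs $G, H$ that are identically colored at depth $d$, I must exhibit a finite sequence of in-time rewirings transforming $G$ into $H$.

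For soundness, the key observation is that an in-time rewiring only ever changes edges within a single snapshot $G^{[t]}$, and it swaps endpoints that share the same depth-$(d-1)$ color. The successor set $\mathcal{S}_G(v,t)$ of a node is, by \Cref{eq::nested _neighborhoods}, the union over $t' \geq t$ of the in-snapshot neighborhoods at time $t'$. A rewiring within snapshot $t$ replaces (for the undirected case) a neighbor $y$ of $x$ with a neighbor $s$ where $c^{(d-1)}(y,t) = c^{(d-1)}(s,t')$ — but since the rewiring is in-time, $t' = t$, so actually $c^{(d-1)}(y,t) = c^{(d-1)}(s,t)$. Hence the \emph{multiset} of depth-$(d-1)$ colors appearing in $\mathcal{S}(v,t)$ is unchanged for every affected node, and by the recursion \cref{eq:temp_WL} the depth-$d$ color $c^{(d)}(v,t)$ is unchanged; nodes at earlier times are affected only through their (nested) successor sets, which inherit the invariance, and nodes at later times are untouched. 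I would phrase this carefully as: a color-respecting in-time rewiring is an automorphism of the ``colored causal-tree'' structure up to depth $d$, so $G$ and the rewired graph lie in the same class $\mathcal{N}^d_G$. Induction then gives $O(G,d) \subseteq \mathcal{N}^d_G$.

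The completeness direction is the main obstacle and where the real work lies. I expect the argument to proceed snapshot-by-snapshot: since colors are assigned at depth $d-1$ but we are asking for depth-$d$ color equality, the depth-$(d-1)$ colors act as fixed ``types'' on the temporal nodes, and within each snapshot $G^{[t]}$ we need a connectivity result for the analogue of a configuration-model switch chain restricted to edges respecting these color types. For the undirected case this should reduce to a known fact: bipartite-degree-constrained edge swaps (here, swaps that preserve, for each node, the multiset of neighbor-colors) connect the relevant configuration space — essentially the NeSt connectivity result of \cite{stamm2023} applied per-snapshot, with the subtlety that the ``degree constraint'' is now a per-color-class degree constraint derived from the depth-$(d-1)$ coloring. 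For the directed case the tilts are even simpler (they only move one endpoint) and a similar per-snapshot argument applies. The genuinely delicate point is the interaction \emph{across} snapshots: one must check that rewiring snapshot $t$ in isolation does not disturb the depth-$(d-1)$ colors used as constraints in other snapshots — which follows from the soundness direction applied at depth $d-1$ — so that the snapshots can be brought into agreement independently and the decompositions $G = \bigcup_t G^{[t]}$, $H = \bigcup_t H^{[t]}$ can be matched one slice at a time. I would also need to argue that two graphs with equal depth-$d$ colors necessarily have, within each snapshot, the same per-color-class degree sequences (so that the per-snapshot configuration spaces are nonempty and comparable); this should follow by reading off degree information from the depth-$\le d$ colors via \cref{eq:temp_WL} and the nesting \cref{eq::nested _neighborhoods}.

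Assembling the pieces: soundness gives one inclusion directly; for completeness, fix a target $H \in \mathcal{N}^d_G$, process snapshots in (say) increasing time order, and for each $t$ use the per-snapshot switch/tilt connectivity (with color-class degree constraints) to rewire $\tilde G^{[t]}$ into $H^{[t]}$, invoking soundness at depth $d-1$ to ensure earlier rewirings leave the relevant color constraints intact. Concatenating these finite rewiring sequences exhibits $H \in O(G,d)$, completing the equality. The only loose end I would flag is making precise the claim that the depth-$d$ coloring determines exactly the per-color-class degree sequences per snapshot and nothing more — i.e. that the per-snapshot configuration spaces are precisely $\mathcal{N}^d_G$ restricted to that snapshot — which is really the content that ties the two inclusions together into an equality rather than a pair of inclusions with a gap.
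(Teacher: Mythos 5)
Your overall architecture matches the paper's: soundness ($O(G,d)\subseteq\mathcal N^d_G$) by checking that a single color-respecting in-time rewiring preserves all refinement colors (the paper gets this even more cheaply, observing that in-time swaps and tilts are a subset of the static NeSt moves on the causal completion graph $\mathrm{CAUSAL}(G)$ and citing \cite{stamm2023}), and completeness by reducing each time slice to the static NeSt connectivity result. However, the step you explicitly leave as a ``loose end'' --- that equal temporal colors force, within every snapshot, equal per-node multisets of \emph{same-time} neighbor colors --- is not a detail but the entire content of the completeness direction. The temporal colors only pin down the aggregated multiset $C^{d,\geq}_G(v,t)$ over \emph{all} future successors, since \cref{eq:temp_WL} aggregates backwards in time; they do not a priori determine how that multiset is distributed across individual slices, which is exactly the prerequisite the per-slice static NeSt argument needs. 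As written, your completeness argument assumes precisely the statement that turns the two inclusions into an equality.

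The paper closes this gap by inducting on the number of timestamps and fixing the \emph{later} slices first: once the suffix $G^{2:N+1}$ has been made literally identical in the intermediate graph $\hat G$ and the target $G$, one has $\mathcal S^{>}_G(v,1)=\mathcal S^{>}_{\hat G}(v,1)$ as sets, hence $C^{d,>}_G(v,1)=C^{d,>}_{\hat G}(v,1)$ as multisets, and then $C^{d,=}(v,1)$ is forced to agree because the $\geq$-multisets already agree by color equality. Your proposed \emph{increasing}-time processing order forgoes this device; it could be salvaged by proving directly that $C^{d,=}(v,t_i)=C^{d,\geq}(v,t_i)\setminus C^{d,\geq}(v,t_{i+1})$ as a multiset difference over consecutive timestamps (using the nesting \cref{eq::nested _neighborhoods} and injectivity of the hash), but that identity must actually be established --- it is the missing lemma. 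Separately, your soundness sketch is essentially right but should spell out the inner induction on the depth $k\le d$ showing that a rewiring respecting depth-$(d-1)$ colors leaves every $c^{(k)}$ invariant, since equal depth-$(d-1)$ colors of the exchanged endpoints imply equal depth-$(k-1)$ colors via the refinement property \cref{eq:colors_refine}.
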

A proof of \Cref{thm:rewiring_procedure} can be found in Appendix~\ref{sec:app_proof}.

\begin{theorem}[Unif. sampling of Algorithm~\ref{alg:tnest}]
Let us denote with $p_r(\tilde G, d, G)$ the probability, that the output of \cref{alg:tnest} is the graph $\tilde G$ using the number of rewirings $r$, number of rounds of the temporal WL algorithm $d$, and starting from $G$.
Then for sufficiently large $r$ the sampling is approximately uniformly at random from $\mathcal {N}^d_G$, i.e. for any graph $\tilde G \in \mathcal {N}^d_G$: 
       \[ \lim_{r\to \infty} p_r(\tilde G, d, G)=\frac{1}{|\mathcal {N}^d_G|}.\]
       \label{thm:uniform_sampling}
\end{theorem}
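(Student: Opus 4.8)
The plan is to establish this as a standard Markov chain convergence result, building on \Cref{thm:rewiring_procedure}. First I would set up the Markov chain explicitly: its state space is $\mathcal{N}^d_G$ (by \Cref{thm:rewiring_procedure}, exactly the graphs reachable from $G$ by in-time color-respecting rewirings), and one step consists of proposing a random in-time rewiring in a uniformly chosen time slice and either performing it or staying put. Algorithm~\ref{alg:tnest} performs $r$ such steps. I would invoke the fundamental theorem of Markov chains: an irreducible, aperiodic, doubly-stochastic chain on a finite state space converges to the uniform distribution. So the proof reduces to verifying three properties — finiteness, irreducibility, aperiodicity, and symmetry (which gives double stochasticity since the uniform distribution is then stationary).

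The key steps, in order, are: (1) \emph{Finiteness} — $\mathcal{N}^d_G$ is a set of temporal graphs on a fixed vertex set and fixed timestamps with boundedly many edges, hence finite. (2) \emph{Irreducibility} — this is exactly the content of \Cref{thm:rewiring_procedure}: the set of graphs reachable by in-time rewirings from $G$ equals $\mathcal{N}^d_G$, and since rewirings are reversible (a swap of $(\{x,y\},t),(\{r,s\},t')$ is undone by swapping $(\{x,s\},t),(\{r,y\},t')$ back; similarly a tilt is reversed by a tilt, as the color-equality condition is symmetric in $y$ and $u$), every state reaches every other, so the chain is irreducible on $\mathcal{N}^d_G$. (3) \emph{Aperiodicity} — with positive probability a proposed rewiring is rejected (e.g., because the target edge already exists, or — if $G$ has at least one edge in some slice — the trivial cases), so the chain has a self-loop at $G$ (hence at every state by irreducibility), giving aperiodicity; alternatively one argues period~$1$ directly. (4) \emph{Symmetry of the transition kernel} — because each rewiring and its reverse are proposed with equal probability (the proposal picks the time slice uniformly and then the edges/edge uniformly, and reversibility pairs up proposals bijectively with equal weight), we get $P(\tilde G \to \tilde G') = P(\tilde G' \to \tilde G)$ for all states. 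A symmetric transition matrix is doubly stochastic, so the uniform distribution on $\mathcal{N}^d_G$ is stationary. Combining (1)–(4) with the convergence theorem yields $\lim_{r\to\infty} p_r(\tilde G, d, G) = 1/|\mathcal{N}^d_G|$ for every $\tilde G \in \mathcal{N}^d_G$.

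The main obstacle I anticipate is step (4), verifying exact symmetry of the proposal distribution — this requires being careful about how the detailed rewiring procedures in \cref{alg:undirnest} and \cref{alg:dirnest} select candidate rewirings, and in particular that the probability of proposing a given move from $\tilde G$ equals the probability of proposing its inverse from $\tilde G'$. Subtleties arise from the fact that a swap can be specified by an ordered or unordered pair of edges, from in-time-only restrictions, and from the rejection bookkeeping (ensuring the resulting edges are non-present and the color constraints hold); one must confirm that these are genuinely symmetric constraints and that the normalization (number of candidate moves) is handled so that forward and reverse proposal probabilities coincide, or else fall back on the standard Metropolis–Hastings acceptance correction, which also yields the uniform stationary distribution. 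A secondary, minor point is justifying that ``sufficiently large $r$'' only appears in the limit, so no mixing-time bound is needed — the statement is purely asymptotic and follows once ergodicity and the stationary distribution are identified.
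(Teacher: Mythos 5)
Your proposal is correct and follows essentially the same route as the paper, which likewise argues that the chain is finite, strongly connected (via Theorem~\ref{thm:rewiring_procedure}), aperiodic through self-loops from rejected moves, and symmetric with uniform out-degree, hence converges to the uniform stationary distribution. You are somewhat more explicit than the paper about verifying the symmetry of the proposal kernel, which the paper simply asserts, but the argument is the same.
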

\begin{proof}
    The Markov chain used in \cref{alg:tnest} is undirected, aperiodic (there are self loops), strongly connected (by \Cref{thm:rewiring_procedure}) and has uniform out degree.
    The steady state of this Markov Chain is thus the uniform distribution.
\end{proof}

\begin{theorem}[Sampling time complexity]
       \label{thm:running time}
       The running time of Algorithm~\ref{alg:tnest} is $O(d\,E\log E + V+T+T\,r)$.
\end{theorem}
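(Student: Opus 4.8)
The plan is to bound the running time of Algorithm~\ref{alg:tnest} stage by stage. The algorithm does three things: (i) one call to Algorithm~\ref{alg:tcolor_refinemt} producing the depth-$(d-1)$ colors of all active temporal nodes; (ii) a loop that, for each of the $T$ timestamps $t$, performs $r$ in-time rewirings on the snapshot $G^{[t]}$ via \cref{alg:undirnest} or \cref{alg:dirnest}; and (iii) the assembly $\tilde G=\bigcup_{t\in T}\tilde G^{[t]}$. Stage (i) costs $O(d\,E\log E)$ time by the analysis of Algorithm~\ref{alg:tcolor_refinemt} established earlier. Stage (iii) is $O(V+T+E)$: write the vertex set, the timestamp sequence, and the (at most $E$) rewired edges. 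So it remains to show stage (ii) runs in $O(V+E+T\,r)$; that is, the per-snapshot preparation is $O(|\E^{[t]}|)$ amortized after an $O(V)$ one-time cost, and each individual rewiring is $O(1)$ expected time.

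For the preparation of snapshot $G^{[t]}$ I would build, in $O(|\E^{[t]}|)$ expected time, (a) a hash set of its edges so the ``$\notin\E$'' tests of \Cref{def:swaps,def:tilts} take $O(1)$, and (b) a dynamic array of its edges so a uniformly random edge can be drawn, and the structures patched after a move, in $O(1)$. In the undirected case a swap partner must be an edge whose two endpoints carry the colors $c(x,t)$ and $c(y,t)$; since those endpoints are active at $t$, bucketing the $O(|\E^{[t]}|)$ active nodes of the snapshot by their (already computed) depth-$(d-1)$ color, and bucketing edges by the induced endpoint-color pair, again costs $O(|\E^{[t]}|)$. In the directed case the replacement endpoint $u$ of a tilt need not be active at $t$, so one needs the partition of the \emph{entire} vertex set $\V$ by the color $c(\cdot,t)$. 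To avoid an $O(V)$-per-timestamp (hence $O(V\,T)$) blow-up here, I would process the timestamps in decreasing order and maintain this partition \emph{incrementally}: by the completion property~\cref{eq::refinement_completion} a vertex $v$ can change color between $t_{i+1}$ and $t_i$ only if $v$ is sending at $t_i$, so the total number of bucket moves over all timestamps is at most $E$; with buckets kept as arrays (each vertex storing its index) this is $O(V)$ to initialize plus $O(E)$ over the whole sweep, independent of $T$.

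Granting these structures, each of the $r$ rewirings in a snapshot is $O(1)$ expected time: draw a uniformly random edge (one for a tilt, two for a swap with the second taken from the matching endpoint-color bucket), check the constantly many color-equality and vertex-inequality conditions with the color arrays, check the one or two ``$\notin\E$'' conditions with the hash set, and if every condition holds apply the move while updating the hash set, the edge array and (for swaps) the endpoint-color buckets in $O(1)$; a rewiring whose conditions fail is a no-op (a self-loop of the chain), still $O(1)$. Summing over the loop, stage (ii) costs $\sum_{t}\big(O(|\E^{[t]}|)+O(r)\big)+O(V+E)+O(T)=O(V+E+T\,r)$, the $O(T)$ being loop overhead. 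Combining the three stages gives $O(d\,E\log E)+O(V+E+T\,r)+O(V+T+E)=O(d\,E\log E+V+T+T\,r)$, absorbing $O(E)$ into $O(d\,E\log E)$.

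The step I expect to be the real obstacle — everything else being bookkeeping — is the directed case's requirement to sample a color-matched endpoint in $O(1)$ without paying $O(V)$ per snapshot; this is precisely what the incremental maintenance of the $\V$-by-color partition achieves, and it is where the structural fact that temporal colors change only at sending events (\cref{eq::refinement_completion}), of which there are at most $E$, is indispensable. A secondary point to state explicitly is the model of computation: all the $O(1)$'s above are $O(1)$ \emph{expected} under $O(1)$-time hashing, consistent with the hashing already assumed for Algorithm~\ref{alg:tcolor_refinemt}; a hashing-free implementation would replace some of them by $O(\log E)$ and is not pursued here.
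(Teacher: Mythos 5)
Your proposal is correct and follows essentially the same route as the paper: the cost is split into the $O(d\,E\log E)$ color-refinement call plus an $O(V+E+T+T\,r)$ rewiring phase, and the crux you identify — amortizing the per-slice color partition for directed tilts to $O(V+E+T)$ total by exploiting that a vertex's color can change between consecutive timestamps only at sending events (\cref{eq::refinement_completion}), of which there are at most $E$ — is precisely the content of the paper's Appendix~\ref{sec:app_alg} and \Cref{alg:tcolors_available}. The only (immaterial) deviations are your decreasing-time sweep versus the paper's increasing-time one and your more explicit accounting of the expected-$O(1)$ hashing assumptions.
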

\begin{proof}
    The total runtime complexity of Algorithm~\ref{alg:tnest} without the running time of Algorithm~\ref{alg:tcolor_refinemt} is $O(V+T+E+T\,r)$. More details are in Appendix~\ref{sec:app_alg}.
\end{proof}

Theorem~\ref{thm:running time} allows for fast randomizations of large temporal graphs. As we shall see in Section~\ref{sec:experiments} the parameter $d$ is usually small for real-world temporal graphs and can thus be regarded as a constant and the sampling is quasi-linear in all other variables.
\subsection{Preserving temporal neighborhood structure preserves temporal Katz centrality}\label{sec:pres}
Identifying central nodes in networks is a common network analysis task.
In this section we consider common network centralities and show that the instant degree centrality and the Temporal Katz Centrality~\cite{grindrod2011} (see also \cref{def:katz}) is exactly preserved in samples of the temporal NeSt model.

\begin{table}[t]

    \centering\resizebox{0.5\textwidth}{!}{
    \begin{tabular}{l|c|c|rrr|r|rr|c}
    database & abb. & dire & nodes & times & edges & WL & $\#$t-NeSt($\infty$) & $\#$t-NeSt($1$) & ref  \\
    \hline
    ht09 & ht & 0 & 113 & 5.25K & 41.6K & 4 & 1 & 698 & \cite{ht}\\
    weaver & wea & 0 & 445 & 23 & 2.67K & 4 & 33 & 6.4K & \cite{wea}\\ 
    workplace & wp & 0 & 92 & 7.1K & 19.7K & 4 & 0 & 42 & \cite{wp}\\ 
    primate & pri & 0 & 25 & 19 & 2.68K & 2 & 17 & 584 & \cite{nr}\\ 
    racoon & rac & 0 & 24 & 52 & 3.99K & 3 & 0 & 706 & \cite{rac}\\ 
    dnc & dnc & 1 & 1.9K & 18.7K & 31.7K & 4 & 9.9M & 10.2M & \cite{dnc}\\ 
    fb-forum & fb & 1 & 899 & 33.5K & 33.7K & 4 & 5.58M & 6.2M & \cite{fb} \\
    email-eu2 & eu2 & 1 & 162 & 32.3K & 46K & 4 & 66K & 86.2K & \cite{eu}\\ 
    email-eu3 & eu3 & 1 & 89 & 8.91K & 12.1K & 4 & 16.1K & 22.3K & \cite{eu}\\ 
    talk\_eo & eo & 1 & 7.2K & 32.4K & 37K & 3 & 98.7M & 99.1M & \cite{eo} \\
    \end{tabular}
    }
    \caption{Statistics on the temporal graphs.
    From left to right: abb : abbreviated name, dire : indicated whether the graph is directed or undirected, nodes : the number of nodes, times : the number of time stamps, WL : the number of iterations until convergence of the color refinement procedure, $\#$t-NeSt($\infty$)/$\#$t-NeSt($1$) : the number of possible rewirings in the convergent depth/ with depth $1$ and ref : reference for the database provenance.}
    \label{tab:stats_db}
\end{table}


Let $G$ be a temporal graph, the instant degree centrality of a temporal node $d_G(v,t)$ is the number of outgoing edges from $v$ at time $t$.
Quite naturally the instant degree centrality of t-NeSt samples is identical.
Let $d\geq 0$, and $G' \in \N^d_G$ then for all temporal nodes $(v,t)$, $d_G(v,t) = d_{G'}(v,t)$.
This can be seen since \cref{alg:tnest} never changes the instant degree.

We similarly find for the Katz centrality:

\begin{theorem}\label{thm:centralities}
    Let $G_1 = (\V_1,\E_1,\T_1),G_2 = (\V_2,\E_2,\T_2) \in \N^\infty_G$, if two nodes $u,v \in \V_1 \cup \V_2$ are such that $c^{(\infty)}(u) = c^{(\infty)}(v)$, then:
    \[ \Gamma_{\text{t-katz}}(u) = \Gamma_{\text{t-katz}}(v).\] 
\end{theorem}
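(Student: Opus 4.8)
The plan is to rewrite the temporal Katz centrality as a weighted count of temporal walks, and then show that this count is pinned down by the stable refinement color through the causal-tree unravelling. First, using \cref{def:katz} directly: because the inner sum ranges over all destinations $j$, we get $\Gamma_{\text{t-katz}}(i)=\sum_{k\ge 0}\alpha^k N_k(i)$, where $N_k(i):=\sum_{j}|\mathcal W_{ij}(k)|$ is the total number of length-$k$ temporal walks starting at node $i$ (the first edge being allowed at any timestamp, and the sets $\mathcal W_{ij}(k)$ being disjoint over $j$). Hence it is enough to prove that $c^{(\infty)}(u)=c^{(\infty)}(v)$ implies $N_k(u)=N_k(v)$ for every $k\ge 0$, after which summing against $\alpha^k$ closes the argument. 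Throughout, for a (static) node $w$ I read $c^{(\infty)}(w)$ as the stable color $c^{(\infty)}(w,t_1)$ at the first timestamp $t_1$; this is the pertinent one since $\mathcal S_G(w,t_1)$ contains, by definition of $\mathcal S_G$, \emph{all} out-edges of $w$ at every time.

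Second, I would identify $N_k(i)$ with the number of depth-$k$ vertices of the causal tree $U^{(k)}_{\text{out}}(i,t_1)$. Unrolling the recursions defining $\V_G^k(\hat v)$ and $\E_G^k(\hat v)$, a root-to-leaf branch of length $k$ is exactly a choice of a successor at each of the $k$ steps, i.e.\ the next edge of a temporal walk together with its time, the non-decreasing-time condition being already encoded in $\mathcal S_G$; this yields a bijection between depth-$k$ vertices and length-$k$ temporal walks out of $i$ (we count walks, not paths, so vertex repetitions are harmless, the tree vertices being honestly distinct sequences $\hat v\frown x$).

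Third --- the technical core --- I would invoke the standard consequence of perfect hashing that $c^{(k)}_G(\hat v)$ encodes precisely the isomorphism type of the depth-$k$ causal tree $U^{(k)}_{\text{out}}(\hat v)$ as a rooted tree; this is proved by induction on $k$, using that injectivity of $\mathrm{hash}$ in \cref{eq:temp_WL} lets one recover from $c^{(k+1)}_G(\hat v)$ both $c^{(k)}_G(\hat v)$ and the multiset $\lmulti c^{(k)}_G(\hat u)\mid\hat u\in\mathcal S_G(\hat v)\rmulti$, so that a color-preserving bijection of successor sets lifts, via the induction hypothesis, to a rooted-tree isomorphism. This correspondence is graph-independent: the same color always represents the same abstract tree, so in particular the number of depth-$k$ vertices of $U^{(k)}_{\text{out}}(\hat v)$ is a function of $c^{(k)}_G(\hat v)$ alone. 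Since the coloring stabilizes and is monotonically refining (\cref{eq:colors_refine}), $c^{(\infty)}(u)=c^{(\infty)}(v)$ forces $c^{(k)}(u,t_1)=c^{(k)}(v,t_1)$ for \emph{every} finite $k$ (for $k$ up to the convergence depth by \cref{eq:colors_refine}, and for larger $k$ because the colors no longer change), whence $N_k(u)=N_k(v)$ for all $k$ and therefore $\Gamma_{\text{t-katz}}(u)=\Gamma_{\text{t-katz}}(v)$. Because $G_1,G_2\in\mathcal N^\infty_G$, their stable colors are drawn from the same scale as those of $G$, so the case $u\in\V_1$, $v\in\V_2$ requires no extra bookkeeping.

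The step I expect to be the main obstacle is the third one, on two counts. Making the vertex--walk bijection of step two and the color-to-tree correspondence of step three airtight requires care with the recursive tree definitions; and, more essentially, the assertion that $c^{(k)}$ \emph{determines} the tree (rather than merely refining something coarser) is exactly where the no-hash-collision hypothesis is indispensable --- without injectivity of $\mathrm{hash}$, $c^{(k+1)}$ would only be a refinement of the pair $(c^{(k)}(\hat v),\text{successor multiset})$ and the induction would fail. A smaller item to pin down is the convention $c^{(\infty)}(w)=c^{(\infty)}(w,t_1)$ together with the remark that a walk counted by $\mathcal W_{ij}(k)$ may begin at any timestamp $\ge t_1$, which is precisely mirrored by rooting the causal tree at $(i,t_1)$, whose successor set already contains every out-edge of $i$.
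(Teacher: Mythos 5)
Your proof is correct, but it takes a genuinely different route from the paper's. The paper reduces the temporal Katz centrality to the \emph{static} Katz centrality of the causal completion graph $\tilde G$ (the block upper-triangular matrix $A_{\tilde G}$ of \cref{eq:big_A}), proves via \cref{lem:adjacency} that $\Gamma_{\text{t-Katz}}(G) = (\matr{1}_{V\times V}\;\dots\;\matr{0}_{V\times V})\,\Gamma_{\text{Katz}}(\tilde G)$, and then delegates the color-invariance to the known static NeSt result of \cite{stamm2023}, using that t-NeSt moves form a subset of the static NeSt moves on $\tilde G$. You instead argue directly on the temporal object: you decompose $\Gamma_{\text{t-katz}}(i)=\sum_k \alpha^k N_k(i)$, identify $N_k(i)$ with the number of depth-$k$ vertices of the causal tree $U^{(k)}_{\text{out}}(i,t_1)$, and prove by induction (using injectivity of the hash) that $c^{(k)}$ determines the rooted isomorphism type of that tree, hence $N_k$. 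Both arguments are walk-counting arguments at heart, but your key lemma (color determines the causal tree) is self-contained and makes the mechanism transparent, whereas the paper's buys brevity by reusing the static machinery and yields the clean matrix identity relating temporal and static Katz centrality as a by-product. Your attention to the convention $c^{(\infty)}(w)=c^{(\infty)}(w,t_1)$ and to why stable-color equality propagates to all finite depths (via \cref{eq:colors_refine} and stability of the partition after convergence) addresses points the paper leaves implicit; the one place to be slightly more explicit is that comparability of colors across $G_1$ and $G_2$ is exactly what membership in $\N^\infty_G$ provides.
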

A proof of \Cref{thm:centralities} can be found in Appendix~\ref{sec:app_proof}.
The temporal Katz centrality defined using the matrix $Q$ can be seen as an ordered product of some operator over the time adjacency matrices of each time slice. In Katz centrality each term in the product is of the form $(\mathit{I} - \alpha A^{[t_i]})^{-1}$. Other forms of this product exist, for instance in~\cite{est_com} the author defines $\Gamma_{\mathrm{com}}$ that uses $\exp{\left(\beta\, A^{[t_i]} \right)}$ where $\beta$ is a fixed constant. The resulting centrality is called the network communicability or thermal Green’s function.
From observations in Section~\ref{sec:experiments}, Figure~\ref{fig:katz} we conjecture the following:
\begin{conjecture}\label{conj:com}
    Let $G_1 = (\V_1,\E_1,\T_1)$, $G_2 = (\V_2,\E_2,\T_2) \in \N^\infty_G$, if two nodes $u,v \in \V_1 \cup \V_2$ are such that $c^{(\infty)}(u) = c^{(\infty)}(v)$, then: $ \Gamma_{\mathrm{com}}(u) = \Gamma_{\mathrm{com}}(v).$
\end{conjecture}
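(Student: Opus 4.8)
The plan is to mirror the proof of \Cref{thm:centralities}, replacing the resolvent $(\mathit{I}-\alpha A^{[t]})^{-1}$ by the exponential $\exp(\beta A^{[t]})$, and to pinpoint exactly where the argument that works for $\Gamma_{\text{t-katz}}$ breaks down for $\Gamma_{\mathrm{com}}$. Throughout, as in \Cref{thm:centralities}, I read $c^{(\infty)}(u)$ for a node $u$ as $c^{(\infty)}(u,t_1)$, the color of the earliest temporal node, since $\Gamma_{\mathrm{com}}(u)$ counts weighted walks leaving $u$ before any edge has occurred. First I would write communicability as the ordered product $Q_{\mathrm{com}} = \exp(\beta A^{[t_1]})\cdots\exp(\beta A^{[t_T]})$ with $\Gamma_{\mathrm{com}} = Q_{\mathrm{com}}\,\mathbb{1}_V$, and introduce the backward recursion $x^{(T+1)} = \mathbb{1}_V$ and $x^{(t)} = \exp(\beta A^{[t]})\,x^{(t^+)}$, where $t^+$ denotes the next timestamp in $\T$, so that $\Gamma_{\mathrm{com}}(i)=x^{(t_1)}_i$. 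Expanding each factor as $\sum_{k\ge 0}\tfrac{\beta^k}{k!}(A^{[t]})^k$ and using that $(A^{[t]})^k_{ij}$ counts length-$k$ walks inside the snapshot $G^{[t]}$, one obtains the walk expansion $\Gamma_{\mathrm{com}}(i)=\sum_{W}\beta^{|W|}/\prod_{t}k_t(W)!$, where the sum runs over all temporal walks $W$ leaving node $i$ and $k_t(W)$ is the number of transitions of $W$ taken at time $t$.

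The crucial contrast with Katz is already visible. For $\Gamma_{\text{t-katz}}$ the weight of a walk is $\alpha^{|W|}$, which depends only on the length $|W|$; hence it is a function of the causal tree $U_{\text{out}}^{(\infty)}(i,t_1)$ viewed as an \emph{unlabelled} rooted tree, and \Cref{thm:centralities} follows because equal stable colors yield isomorphic such trees. For communicability the weight $\beta^{|W|}/\prod_t k_t(W)!$ additionally depends on how the transitions of $W$ cluster across timestamps, i.e.\ on the absolute-time labelling of the causal tree, which plain temporal color refinement does not record. The natural route is therefore to prove the stronger statement that $x^{(t)}_i = x^{(t')}_{i'}$ whenever $c^{(\infty)}(i,t)=c^{(\infty)}(i',t')$, by backward induction on the timestamps: the inductive step requires that the stable color of $(i,t)$ determine both (i) the \emph{same-time} successor structure needed to evaluate the powers $(A^{[t]})^k$ through iterated same-time successors, and (ii) the colors $c^{(\infty)}(\cdot,t^+)$ of the endpoints that these same-time walks reach.

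The main obstacle is exactly steps (i)--(ii). The nested-neighborhood identity \cref{eq::nested _neighborhoods} gives the clean decomposition $\mathcal S_G(v,t) = \{(w,t)\mid ((v,w),t)\in\E\}\,\sqcup\,\mathcal S_G(v,t^+)$, so the same-time successors are morally the ``difference'' of the successor sets at $t$ and at $t^+$. However, the temporal color of $(v,t)$ records only the \emph{multiset} of successor colors, and one and the same color may be realised simultaneously by a same-time successor and by a strictly-later one; consequently $c^{(\infty)}(v,t)=c^{(\infty)}(v',t')$ does \emph{not} force $c^{(\infty)}(v,t^+)=c^{(\infty)}(v',t'^+)$, since two equal successor multisets can split into their same-time and later-time parts in genuinely different ways. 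This ambiguity is precisely what blocks the induction and is, I believe, the reason the statement must remain conjectural: a full proof would have to establish either that the ultimate coloring is in fact fine enough to isolate the same-time part — plausibly by combining \cref{eq::nested _neighborhoods} with the completion identity \cref{eq::refinement_completion} to reconstruct $c^{(\infty)}(\cdot,t^+)$ from $c^{(\infty)}(\cdot,t)$ — or that the differing splits always cancel in the aggregate, so that $\Gamma_{\mathrm{com}}$ is invariant even when the intermediate causal trees carry incompatible time labels. Settling one of these two alternatives is the hard part, and the empirical agreement reported in \Cref{fig:katz} is what makes me optimistic that the former holds for the ultimate coloring.
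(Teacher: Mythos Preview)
The paper does not prove this statement: it is explicitly labelled a \emph{conjecture}, motivated solely by the numerical evidence in \Cref{fig:katz}, and no argument (partial or otherwise) is offered. Your proposal is therefore not to be compared against a proof in the paper, because there is none.

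That said, what you have written is not a proof either, and you are candid about this. Your analysis correctly locates the obstruction: the Katz argument in \Cref{thm:centralities} goes through because the walk weight $\alpha^{|W|}$ depends only on length, whereas the communicability weight $\beta^{|W|}/\prod_t k_t(W)!$ depends on the per-timestamp partition of transitions, information that the successor-multiset recursion \cref{eq:temp_WL} does not obviously encode. Your observation that $c^{(\infty)}(v,t)=c^{(\infty)}(v',t')$ need not force $c^{(\infty)}(v,t^+)=c^{(\infty)}(v',t'^+)$, because the same successor-color multiset can split differently into same-time and strictly-later parts, is the right diagnosis of why the backward induction stalls. This goes beyond anything the paper says; the authors simply record the empirical pattern and leave the question open. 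If you want to push further, the two routes you sketch (showing the stable coloring actually determines the same-time slice, or showing the discrepancies cancel in the $\mathbb{1}_V$-aggregate) are the natural ones, but neither is settled here or in the paper.
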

\section{Empirical evaluation}\label{sec:experiments}

\begin{figure}[t]
    \centering
    \begin{subfigure}[b]{0.487\columnwidth}
    \includegraphics[width=1\columnwidth]{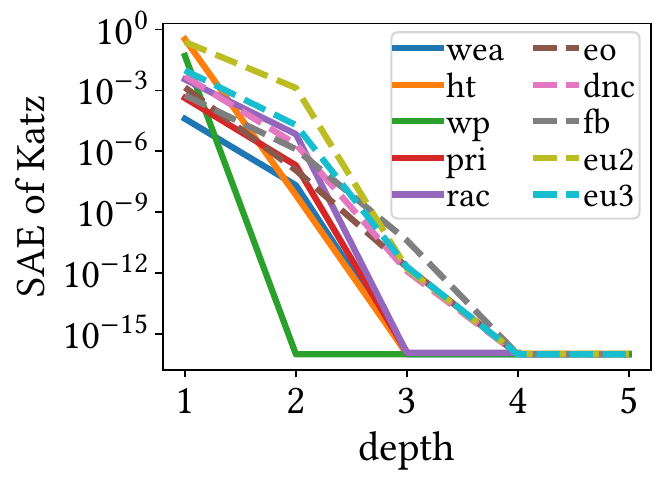}
    \caption{Katz centrality}
    \end{subfigure}
    \begin{subfigure}[b]{0.487\columnwidth}
\includegraphics[width=1\columnwidth]{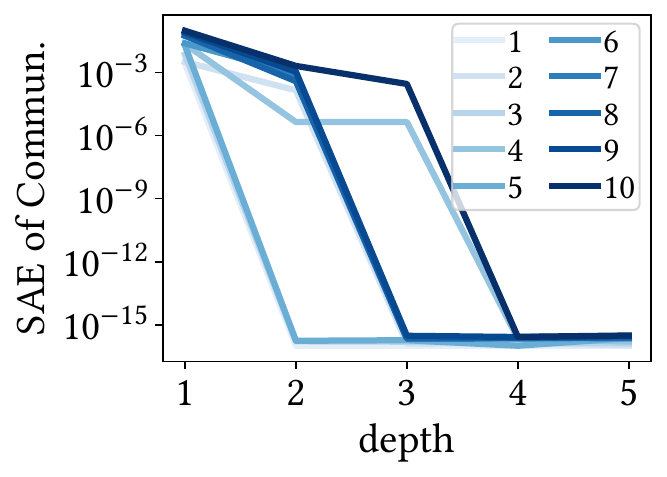}
    \caption{Communicability}
    \end{subfigure}
    \caption{Temporal network centralities as a function of depth.
    We show the sum absolute error (SAE) difference between the original and sampled graph.
    The left plot shows the SAE of Katz centrality $\Gamma_{\text{t-katz}}$ on real world graphs.
    The undirected graphs (solid lines) usually converge faster than the directed graphs (dashed lines).
    The right plot shows the SAE of Communicability centrality $\Gamma_{\mathrm{com}}$ for random graphs of varying density.
    The random graphs have 240 times, 80 nodes and each undirected temporal edges exist with probability $\tfrac{i}{80\cdot 240}$ where $i$ is indicated in the legend.
    Values are capped below by $10^{-16}$.
    We see that with increasing depth, centralities are better and better preserved.}
    \label{fig:katz}
\end{figure}

\begin{table*}[t]
    \centering
    \resizebox{1.0\textwidth}{!}{
    \begin{tabular}{l||l|llllll||l|llllll}
     data & origin. & t-NeSt($\infty$) & t-NeSt($1$) & DSS & RE & RT & RC & origin. & t-NeSt($\infty$) & t-NeSt($1$) & DSS & RE & RT & RC\\
    \hline
ht & 0.891 & $\mathbf{0.890(0)}$ & $0.890(0)$ & $0.23(3)$ & $0.005(0)$ & $0.055(3)$ & $0.004(1)$ & 157 & $\mathbf{157(0)}$ & $\mathbf{157(0)}$ & $77(3)$ & $77.80(5)$ & $160(0)$ & $28.90(3)$ \\
wea & 0 & $\mathbf{0}$ & $\mathbf{0}$ & $\mathbf{0}$ & $\mathbf{0.001(1)}$ & $-$ & $-$ & 0.201 & $\textbf{0.201(0)}$ & $0.184(1)$ & $0.124(1)$ & $0.012(4)$ & $-$ & $-$ \\
wp & 1.18 & $-$ & $\mathbf{1.18}$ & $0.80(2)$ & $0.002(1)$ & $0.046(2)$ & $0.005(1)$ & 35.8 & $-$ & $\mathbf{35.8}$ & $23(6)$ & $15.40(3)$ & $47.60(2)$ & $9.6(1)$ \\
pri & 0.084 & $\mathbf{0.084}$ & $\mathbf{0.084(1)}$ & $0.081(1)$ & $0.061(2)$ & $0.078(5)$ & $0.072(2)$ & 782 & $\mathbf{782(0)}$ & $783(0)$ & $774(2)$ & $766(4)$ & $641(21)$ & $476(14)$ \\
rac & 0.236 & $-$ & $\mathbf{0.232(1)}$ & $0.129(4)$ & $0.079(3)$ & $0.102(5)$ & $0.088(3)$ & 1151 & $-$ & $\mathbf{1150(1)}$ & $1141(7)$ & $1141(6)$ & $\mathbf{1151(0)}$ & $761(9)$ \\
dnc & 0.023 & $\mathbf{0.021(0)}$ & $0.017(0)$ & $0$ & $0(0)$ & $0.003(0)$ & $0.00(0)$ & 1.92 & $2.64(0)$ & $2.52(1)$ & $0$ & $0$ & $\mathbf{2.35(1)}$ & $0.046(1)$ \\
fb & 0.039 & $\mathbf{0.037(0)}$ & $0.019(0)$ & $0.00(0)$ & $0(0)$ & $0.001(0)$ & $0.00(0)$ & 0.001 & $\mathbf{0.001(0)}$ & $0.001(0)$ & $0.001(0)$ & $0.001(0)$ & $0.002(0)$ & $0.002(0)$ \\
eu2 & 0.061 & $\mathbf{0.052(0)}$ & $0.045(0)$ & $0.001(0)$ & $0.00(0)$ & $0.007(0)$ & $0.001(0)$ & 291 & $339(0)$ & $\mathbf{323(1)}$ & $1.35(3)$ & $1.92(2)$ & $383(1)$ & $27.40(2)$ \\
eu3 & 0.06 & $\mathbf{0.050(0)}$ & $0.038(1)$ & $0.002(0)$ & $0.00(0)$ & $0.007(1)$ & $0.001(0)$ & 9.04 & $\mathbf{9.16(2)}$ & $8.06(10)$ & $0.96(4)$ & $1.40(3)$ & $12.20(0)$ & $3.99(7)$ \\
eo & 0.15 & $\mathbf{0.155(0)}$ & $0.088(0)$ & $0$ & $0(0)$ & $0.001(0)$ & $0$ & 0.15 & $\mathbf{0.150(0)}$ & $0.152(1)$ & $0$ & $0$ & $0.220(1)$ & $0.001(0)$ \\

    \end{tabular}
    }
    \caption{Empirically determined values of the edge persistence $C(G)$ (left 7 columns) and triangles per temporal nodes $\Delta(G)$ (right 7 columns).
    We report the values for the original graph (origin.) and the six randomization models.
    The t-NeSt models are introduced here.
    DSS is degree constrained snapshot shuffling, RE stands for Randomized Temporal Edges~\cite{holme2012temporal,gauvinTempReview2022}, RT randomizes the times keeping the topology and multiplicities in the aggregated graph, and RC randomizes the number of context, keeping the aggregated graph topology.
    Each value is an average over $10$ randomizations and in parenthesis we provide the standard deviation on the last shown digit, if none is shown the values were preserved exactly.
    For both measures we highlight the values closest to the original graph.
    We omit valeswhen there are no possible rewirings.
    We find that the proposed t-NeSt procedure overall best preserves the edge persistency.
    For the number of triangles, we find that RT sometimes performs equally well and in one case even better than the t-NeSt procedures.}
    \label{tab:pers}
\end{table*}

In this section we will evaluate experimentally how t-NeSt behaves on real-world temporal graphs\footnote{For anonimity reasons the code will be made available upon publication}.
We therefore look at t-NeSt at different depths and compare it to existing temporal graph randomization schemes.
Definitions and further details can be found in Appendix~\ref{sec:app_exp}.
In our experiments we used four randomization schemes to compare with t-Nest.
We refer the interested reader to the review by Gauvin et al.~\cite{gauvinTempReview2022} that contains many temporal networks models.
\begin{enumerate}
\item \textbf{Randomized Edge (RE)}~\cite{li2017fundamental,holme2012temporal,holme_reach}.
RE only preserves the overall degree of a node $v$ (the number of temporal edges where $v$ is involved)
\item \textbf{Degree Snapshot Shuffling (DSS)}~\cite{sun,galimberti}. DSS preserves the time-slice degree of any node.
\item \textbf{Random Times (RT)}~\cite{kivela2012multiscale,holme2012temporal,posfai2014structural}. RT preserves the total number interactions between any two nodes but redistributes them over time.
\item \textbf{Randomized Contacts (RC)}~\cite{holme2012temporal,holme_reach}. RC preserves the topology of the aggregated network.
\end{enumerate}





We study the stability of the following temporal graph measures under randomization:
Edge persistence~\cite{nicosia2013graph} $C(G)$, Triangles~\cite{triangle} $\Delta(G)$, Causal triangles~\cite{triangle} $\Vec{\Delta}(G)$, and Burstiness~\cite{holme2012temporal} $B(G)$ defined in~\Cref{sec:app_exp}.
We further use Temporal Katz Centrality~\cite{grindrod2011} $\Gamma_{\text{t-katz}}(G)$ defined in~\Cref{eq:katz}, and Communicability Centrality~\cite{est_com} denoted $\Gamma_{\mathrm{com}}(G)$ defined in~\Cref{sec:pres}.

We consider directed and undirected temporal graphs from various domains. \Cref{tab:stats_db} gives general information about each dataset. We note that the number of possible rewirings in the directed case is much larger than in the undirected case where for some databases there are even no rewirings possible when $d=\infty$. This is due to the fact that undirected graphs are more constrained since edges go in both directions.

Our results on edge persistence and triangles are given in Table~\ref{tab:pers}. The results on burstiness can be found in \Cref{tab::B_active}, \Cref{tab::B_send} and \Cref{tab::B_receive} and finally, the results on causal triangles are given in \Cref{tab::cau_trian}. The results show that in most cases t-NeSt($\infty$) preserves better the values of the original graph than the other randomization schemes (the values of t-NeSt($\infty$) are closer to the original values). We also observe that t-NeSt($1$) values are the second closest to the original one.
This shows that the structure is better preserved with increasing depth. It is also interesting to notice that the results hold for both undirected and directed networks in a similar way. Therefore, the fact the directed networks can have a large number of rewirings (see \Cref{tab:stats_db}) compared to the undirected ones does not change the efficiency of t-Nest.

On the other hand, Figure~\ref{fig:katz} shows how centrality measures are well-preserved when randomizing using increasing depth. The results for Katz are in accordance with Theorem~\ref{thm:centralities} and lead us to the Conjecture~\ref{conj:com} seeing that the values of $\Gamma_{\mathrm{com}}$ become stable as $d$ increases.
Computing the stable colors for the presented graphs takes less than 1.5s on commodity hardware using no parallelism.

\section{Conclusion}
\label{sec::conclusion}
In this paper we introduced an efficient temporal graph randomization procedure that well-preserves the original graph structure. This was shown theoretically for some measures and experimentally for others. 

We have assumed that at any time $t$ we are aware of all future edges but one could also define the procedures for a \emph{finite look-ahead} $h$, i.e. an actor $v$ at time $t$ is only ``aware'' of future edges up to time $t+h$.
Many of the presented considerations may be applied to the finite horizon scenario as well while the complexity of presented algorithms remains unchanged. 
Further, according to our simulations in Figure~\ref{fig:katz} we expect that the network communicability to be preserved in t-NeSt samples. More generally, it would be if interest to characterize the measures that are preserved by the temporal color refinement procedure. Finally, it would be of interest to see how well this scheme generalizes to higher order structures such that \emph{hypergraphs} or \emph{simplicial complexes} and temporal versions of them.

 \bibliographystyle{plain}
  \bibliography{biblio}

\appendix
\section{Appendix}\label{sec:app}

\subsection{Additional Definitions}\label{sec:app_def}

\begin{Definition}
    Mathematically, we may define the causal-tree (out-unraveling) of temporal node $(v,t)$ of depth $d$ for temporal graph $G$ as a directed tree $U_{\text{out}}^{(d)}(v,t) := (\V_G^d(v,t), \E_G^d(v,t))$.
To shorten the notation, in the following we write $\hat v:= (v,t)$ and $\hat u:= (u,t')$ to refer to the temporal node.
Consequently, we write $\V_G^d(\hat v):=\V_G^d(v,t)$ which applies similarly to the edges $\E_G^d$ and the successors $S_G$.

with nodes
$$
\V_G^d(\hat{v}) := \begin{cases}
    \{\hat{v}\} & d = 0,\\
    \bigcup\limits_{\hat{u} \in \mathcal S_G(\hat{v})}\{\hat {v} \frown x \mid x \in \V^{d-1}(\hat{u})\} & \text{otherwise.}
\end{cases}
$$
where $\frown$ denotes tuple concatenation.
We call the temporal node $\hat v$ the root of the causal-tree.
The edges of the causal-tree are
\begin{align*}
    \E^d(\hat{v}):= 
\bigcup_{\hat{u} \in \mathcal S_G(\hat{v})} \left\{(\hat v\frown a,\hat v\frown b) \mid (a,b) \in \E^{d-1}(\hat{u})\right\} \\ \cup \left\{(\hat{v}, \hat{v} \frown \hat{u})\right\}.
\end{align*}
\end{Definition}

\subsection{Additions to Algorithms of Section~\ref{sec:algo} and Section~\ref{sec:theory}}\label{sec:app_alg}
With Algorithm~\ref{alg:tcolor_refinemt} we can compute the refinement colors of the active nodes using no more than $O(E)$ memory and $O(dE\log E)$ time.
We can then use \cref{eq::refinement_completion} to fill in the color of the remaining nodes and store that information in e.g. a two dimensional array of size $(T,V)$ which requires $O(VT)$ time and memory if we desire the temporal refinement colors of all nodes simultaneously and available in $O(1)$.

During the generation procedure it is not necessary to know all the colors of all the temporal nodes.
We will see that in the undirected case it is sufficient to know the refinement colors of only the active nodes.
In the directed case though, we need to know for each time $t$ the partition of all the temporal nodes $(v,t)$ into their respective color classes. However this knowledge is only needed per time slice and not the entire graph. We thus present time slice availability of the partition of nodes into colors can be achieved in $O(E+T)$ aggregated over all slices


The key idea to achieve availability of temporal colors (or the partition of nodes into color classes), is to realise that there are few times $t_i$ and $t_{i+1}$ when the color of temporal node $(v,t_i)$ is not equal to that of temporal node $(v,t_{i+1})$.
In particular this will only ever happen if node $(v,t_i)$ was sending.
Then node $(v,t_i)$ has more temporal successors than $(v,t_{i+1})$ and consequently they will have different colors.
An algorithm that efficiently provides the colors for each slice goes as follows.
Consider the slices increasing in times.
For the first slice we compute the colors for all nodes. For later times $t_i$, we only apply changes to the colors necessary by activity of nodes at time $t_{i-1}$.
This allows us to achieve a runtime of overall $O(V+E+T)$. Algorithm~\ref{alg:tcolors_available} summarises these steps.

\begin{procedure}[tb]
 \KwIn{Graph $G^{[t]}$, colors $c$, number of iterations $r$}
 \KwOut{Sample Graph $G'$ distributed as $\mathcal N_G(c)$}
 partition edges $uv$ of $G^{[t]}$ into subgraphs $g_{c_u, c_v}$\\
 let $\mathcal G$ be a list of all those subgraphs\\
    \For {$r$ steps}{
 Randomly choose a subgraph $g \in \mathcal G$\\

        \scalebox{0.8}[1.0]{choose two edges $u_1 v_1$ and $u_2 v_2$ unif. randomly from $E(g)$}\\
        \uIf{$|\{u_1, v_1, u_2, v_2\}| = 4 $}{ 
            \uIf {$u_1 v_2 \notin E(g) \text{ and } u_2 v_1 \notin E(g)$}{
                remove $u_1 v_1, u_2 v_2$\\
                ad\tikzmarknode{anch}{d} $u_1 v_2, u_2 v_1$
            }%
        }%

}

 \begin{tikzpicture}[remember picture,overlay,scale=0.5,every node/.style={scale=0.5}]%
               \node (alpha) at ($(anch.west)+(6.5, -.2)$)%
                 {%
                    \begin{tikzpicture}[scale=1.0, every node/.style={scale=1.0, inner sep=1pt, fill=black, circle, text=white, font=\huge}]
                \node[] (a) at (0,0) {$u_1$};
                \node[fill=blue] (b) at (0,1.5) {$v_1$};
                \node[] (c) at (1.5,0) {$u_2$};
                \node[fill=blue] (d) at (1.5,1.5) {$v_2$};
                \draw [-,line width=2pt] (a) to (b);
                \draw [-,line width=2pt] (c) to (d);
                    \end{tikzpicture}
                 };%
                \node (beta) at (alpha.east) [anchor=west,xshift=1cm]
                 {%
                    \begin{tikzpicture}[scale=1.0, every node/.style={scale=1.0, inner sep=1pt, fill=black, circle, text=white, font=\huge}]
                    \node[] (a) at (0,0) {$u_1$};
                    \node[fill=blue] (b) at (0,1.5) {$v_1$};
                    \node[] (c) at (1.5,0) {$u_2$};
                    \node[fill=blue] (d) at (1.5,1.5) {$v_2$};
                    \draw [-,line width=2pt] (a) to (d);
                    \draw [-,line width=2pt] (c) to (b);
                    \end{tikzpicture}
                 };%
               \draw [->,line width=1.0pt] (alpha)--(beta);%
\end{tikzpicture}%
 return $G' = \bigcup_{g\in \mathcal G} g$\\
\footnotesize{where $g_{c_u, c_v}$ corresponds to the graph having edges $(u,v)$ with $c{(u,t)}=c_u$ and $c{(v,t)}=c_v$, $E(g)$ is the set of edges of $g$}\\

 \caption{undirNeSt(), {\small Rewiring undirected graphs} }%
 \label{alg:undirnest}%
\end{procedure}

\begin{procedure}[tb]
 \KwIn{Graph $G$, colors $c$, iterations $r$, time $t$}
 \KwOut{Sample Graph $G'$ distributed as $\mathcal N_G(c)$}
 \uIf{$t$ is the first time}{Call \texttt{Initialize}($G$, $c$) of \Cref{alg:tcolors_available}} 
 c\_colors, u\_nodes = \texttt{Advance}($t$, $c$) of \Cref{alg:tcolors_available}\\
 Use c\_colors and u\_nodes to update node partitions\\
 \For{$r$ steps}{
Choose an edge $(u,v)$ unif. randomly from $\E^{[t]}$\\
        \scalebox{0.85}[1.0]{choose node $x$ unif. randomly from nodes colored like $v$}\\
        \uIf {$(u,x) \notin \E^{[t]}$ and $u \neq x$}{%
        remove edge $(u,v)$ from $\E^{[t]}$\\
        add \tikzmarknode{anch}{e}dge $(u,x)$ to $\E^{[t]}$
 \begin{tikzpicture}[remember picture,overlay,scale=0.5,every node/.style={scale=0.5}]%
               \node (alpha) at ($(anch.west)+(8.1, 0.0)$)%
                 {%
                    \begin{tikzpicture}[scale=1.0, every node/.style={scale=1.0, inner sep=4pt, fill=black, circle, text=white, font=\huge}]
                \node[] (a) at (0,0) {$u$};
                \node[fill=blue] (b) at (0,1.5) {$v$};
                \node[fill=blue] (d) at (1.5,1.5) {$x$};
                \draw [->,line width=2pt] (a) to (b);
                    \end{tikzpicture}
                 };%
                \node (beta) at (alpha.east) [anchor=west,xshift=1cm]
                 {%
                    \begin{tikzpicture}[scale=1.0, every node/.style={scale=1.0, inner sep=4pt, fill=black, circle, text=white, font=\huge}]
                    \node[] (a) at (0,0) {$u$};
                    \node[fill=blue] (b) at (0,1.5) {$v$};
                    \node[fill=blue] (d) at (1.5,1.5) {$x$};
                    \draw [->,line width=2pt] (a) to (d);
                    \end{tikzpicture}
                 };%
               \draw [->,line width=1.0pt] (alpha)--(beta);%
\end{tikzpicture}%
        }
    }
 
 return $G'=(\V,\E)$%

 \caption{dirNeSt(): {\small Rewiring directed graphs}}%
 \label{alg:dirnest}%
\end{procedure}



 \begin{algorithm2e}[tb]

\SetKwFunction{FAdvance}{Advance}
\SetKwFunction{FInit}{Initialize}
\SetKwProg{Fn}{Function}{:}{}
\Fn{\FInit{$G$, $\mathrm{colors}$}}{
 $h = \{ i  : \mathrm{random\_hash}(), \,  i \in \{0,\dots,|\mathrm{colors}|\} \}$\\
  $\mathrm{h\_colors} = dict()$\\
  $\mathrm{cur\_col} = \{ v:0\,,\, v \in \V \}$\\
  $\mathrm{cur\_h} = \{ v:0\,,\, v \in \V \}$\\
\For{ $(u,v, t) \in \E$}{
        $\mathrm{cur\_h}[u]$ += $ h[\mathrm{colors}[(v,t)]] $\\
     }
}
\BlankLine
\SetInd{0.3em}{0.5em}
\Fn{\FAdvance{t, $\mathrm{colors}$}}{
     $\mathrm{changed} = set()$\\
    \For{ $(u,v) \in \E^{[\mathrm{prev}(t)]}$}{
        $\mathrm{cur\_h}[u]$ -= $ h[\mathrm{colors}[(v,t)]] $\\
        $\mathrm{changed} = \mathrm{changed} \cup \{ u\}$\\
     }
     \For{$v \in \mathrm{changed}$ }{
        \If{$\mathrm{cur\_h}[v] \notin \mathrm{h\_colors}$}{
             $\mathrm{h\_colors}[cur\_h[v]] = |\mathrm{h\_colors}|$\\
        }
        $\mathrm{cur\_col}[v] = \mathrm{h\_colors}[\mathrm{cur\_h}[v]] $
     }
 }
 \Return $\mathrm{cur\_col}, \mathrm{changed}$\\
 \footnotesize{$\mathrm{prev}(t)$ is the preceding timestamp in the time sequence, $n_v(t)$ is the first timestamp $t'>t$ where node $v$ is active if none returns $t$.}
 \caption{{\small Efficient availability of temporal colors}}%
 \label{alg:tcolors_available}
\end{algorithm2e}
\subsection{Proofs of Section~\ref{sec:theory}}\label{sec:app_proof}

\subsubsection*{Causal completion of a temporal graph}
For a temporal Graph $G = (\V,\E,\T)$, the temporal graph restricted to times from $t_a$ to $t_b$ is
$G^{t_a:t_b}:=(\V, \{(u,v,t) | (u,v,t) \in \E, t_a\leq t \leq t_b\})$.
For temporal graphs we define the ``causal completion" $\tilde G = \text{CAUSAL(G)}:=(\tilde \V, \tilde \E)$ (a static directed graph) with
$\tilde \V = \{(u,t)\,|\, u\in \V, t\in \T\}$ and $\tilde \E = \{((v,t), (u, t'))| u,v \in \V,\, (u, t')\in \mathcal{S}_G(v,t)\}$.
This causal completion graph of $G$ is useful as the successors of a temporal node $(v,t)$ in $G$ correspond to the out neighbors (denoted $N^{\text{out}}_{\tilde G}(v,t)$) of $(v,t)$ in $\tilde G$. The adjacency matrix of the graph $\tilde G$ then has the form:
\begin{align}
  A_{\tilde G}=\begin{pmatrix}
A^{[t_1]} & A^{[t_2]}  & \cdots & A^{[t_T]}\\
0         & A^{[t_2]}  & \cdots & A^{[t_T]}\\
0         & 0         & \cdots & A^{[t_T]}\\
\vdots         & \vdots          & \ddots & \vdots\\
0         & 0          & \cdots & A^{[t_T]}\\
\end{pmatrix}  \label{eq:big_A}
\end{align}

This can be shown by induction on the number of timestamps $T:=|\T|$.
For $T=1$, we have that $\tilde G=\text{CAUSAL}(G)$
and $A_{\tilde G}=A^{[t_1]}_G$.
For the induction step let us assume that for any temporal graph $G=(\V,\E,\T)$ with $|\T| = N$ the adjacency matrix of the causal graph of $\tilde G=\text{CAUSAL}(G)$ has the structure introduced in \cref{eq:big_A}.
We will show the property for any temporal graph $G=(\V,\E,\T)$ with $|\T| = N+1$.
We see that we can apply the induction hypothesis on the subgraph $G^{t_2:t_{N+1}}$ since it has $N$ timestamps.
But the successors of node $v$ at time $1$ to nodes at any time $t$ is exactly given by the $v$-th row of the adjacency matrix of $G^{[t]}$.
This explains the shape of the first $V$ rows of $A_{\tilde G}$.
Lastly none of the nodes at times $t>1$ have successors to any node at time $t_1$. Thus the first $V$ columns are empty starting from row $V+1$.
The remaining entries of the adjacency matrix $\tilde G$ are obtained from the induction hypothesis.

\subsubsection*{Proof of Theorem~\ref{thm:rewiring_procedure}}
\vspace{-2mm}



Let $* \in \{\geq,>,=\}$, then:
\begin{align*}
 \mathcal S^{*}_G(v,t) &:= \{ (w,t') | (v,w,t') \in \E, t' * t \}.
\end{align*}
We also define the multisets of neighbor-colors:
\begin{align*}
    C^{d,*}_G(v,t) &:= \lmulti c^{d-1}(v,t) | (v,t) \in \mathcal S^{*}_G(v,t) \rmulti.
\end{align*}
Then we have that $\mathcal{S}_G^{\geq}(v,t) = \mathcal{S}_G(v,t)$. We notice that if the WL-colors are the same, then certainly the greater or equal multiset of neighbor-colors are also the same, i.e. $c_G^d(v,t)=c_{\tilde G}^d(u,t) \Rightarrow C^{d,\, \geq}_G(v,t) =  C^{d,\, \geq}_{\tilde G}(u,t)$. We now want to show: 
\[ O(G,d) = \mathcal{N}^d_G.\]

\textbf{Case $\mathbf \subseteq$:} The moves performed by the procedures 1 and 2 are a subset of all possible static NeSt~\cite{stamm2023} moves on the causal graph $\text{CAUSAL}(G)$.
Thus they preserve the colors.

\textbf{Case $\mathbf \supseteq$:}
We can extend the result of \cite{stamm2023} to the temporal case by induction on the total number of timestamps $T = |\T|$. Throughout the following proof let $d\geq 1$. \textbf{Base case} Case $T = 1$ is covered by the static NeSt procedure \cite{stamm2023}.\\
For the \textbf{induction step}, lets assume that
$O(G,d) \supseteq \mathcal{N}^d_G$
holds for all $G=(\V, \E, \T)$ with $T \leq N$. Let us start by assuming we have two graphs $G$ and $\tilde G$ identically colored at depth $d$, i.e. $G=(\V, \E, \T)$ and $\tilde G=(V, \E', \T) \in  \mathcal N^d_G$ with $|\T| = N+1.$ We can then apply the assumption of the induction to the temporal graphs consisting of all but the first time slice, i.e. we can mutually reach the temporal graph $\tilde G^{2:N+1}$ and the temporal graph $ G^{2:N+1}$ with the temporal NeSt procedure.
Consequently, starting from $\tilde G$ we can reach an intermediate graph $\hat G := \tilde G^{[1]} \cup G^{2:N+1}$ that only differs from $G$ on edges in the first timestamp.

In the following we show, that applying the rewirings (Definition~\ref{def:swaps} and Definition~\ref{def:tilts}) in the first timestamp transforms our intermediate graph $\hat G$ into the target graph $G$ which concludes the induction.
As the intermediate graph $\hat G$ was obtained from the starting graph $\tilde G$ by using color preserving switches, we know from $O(G,d) \subseteq \mathcal{N}^d_G$ that all three graphs are colored identical in all times. This means, that the greater or equal color multisets are the same for all nodes $v$, i.e.$\forall_{v\in \N}$
\begin{align*}
    \; C_{G}^{d,\, \geq}(v, 1) = C_{\hat G}^{d,\, \geq}(v, 1)
\end{align*}
But as $G$ and $\hat G$ have identical edges on all times except the first, i.e. $G^{2:N+1}=\hat G^{2:N+1}$, we have
\begin{align*}
    N^>_G(v,1) \overset{\text{Def.}}= N^\geq_G(v,2)= N^\geq_{\hat G}(v,2) \overset{\text{Def.}}= N^>_{\hat G}(v,1)
\end{align*}
If $\mathcal S^>_G(v,1) = \mathcal S^>_{\hat G}(v,1)$ then the color multisets  $C_G^{d,\, >}(v, 1)=C_{\hat G}^{d,\, >}(v, 1)$ are the same.
But now we have for all $v\in V$:


\begin{align*}
    C_G^{d, \geq}(v, 1) &= C_{\hat G}^{d, \geq}(v, 1)\; \implies
     C_G^{d,=}(v, 1) &= C_{\hat G}^{d, =}(v, 1)
\end{align*}
So the color multisets at time 1 are the same.

Coming back to the induction, we want to show that we can transform $\tilde G$ into $G$. From the induction hypothesis we know we can transform $\tilde G$ into $\hat G$, with $G$ and $\hat G$ agreeing on edges in all but the first time slice.
So what is left is to transform the first timestamp of our intermediate graph $\hat G^{[1]}$ into the first timestamp of our target graph $G^{[1]}$. But we just showed the intermediate and target graph are identically colored if we only consider neighbors at the same time slice (i.e. $\hat C^{d,\, =}(v, 1)=C^{d,\, =}(v, 1)$) which is exactly the prerequisite to applying the static NeSt procedure \cite{stamm2023} to the first timestamp. Thus overall starting from $\tilde G$ we can reach any other graph $G \in N^d_G$ using the algorithm which concludes the induction.
This shows $O(G,d) \supseteq N^d_G$.

\subsubsection*{Proof of Theorem~\ref{thm:centralities}}
In the following we show that the temporal-Katz centrality is a constant matrix multiplied with the Katz centrality of the causal completion graph $\tilde G$.
This is useful as the static NeSt moves preserve the static Katz centrality.
And the t-NeSt moves of the temporal graph $G$ are only a subset of all static NeSt moves possible on the corresponding causal completion graph $\tilde G$.
Thus as the temporal NeSt switches leave the static Katz centrality of the causal completion invariant which in turn leaves the temporal Katz centrality invariant.
All that remains to be shown is, that the temporal Katz centrality is indeed a constant times the non-temporal Katz centrality.

We start by considering finite terms $\Gamma_{\text{t-Katz}}^k$ for which we will show that they are identical to the temporal Katz centrality in the limit $k\to \infty$.
\begin{align*}
    \Gamma_{\text{t-Katz}}^k :&= (\matr{1}_{V\times V}\; \dots \; \matr{0}_{V\times V}) \Gamma_{\text{Katz}}^k(\tilde A)
\end{align*}
with $\Gamma_{\text{Katz}}^k(\tilde A):=\sum_{l=1}^k \alpha^l \tilde A^l  \mathbb{1}_{VT}$
where $\matr{1}_{V\times V}$ denotes the $V \times V$ identity matrix and $\mathbb{1}_{VT}$ the all ones column vector of size $VT$.
This agrees with the definition given in \cref{eq:katz} as we can rewrite the product $(\matr{1}_{V\times V}\; \dots \; \matr{0}_{V\times V})\tilde A^l  \mathbb{1}_{V\,T}$ as
\begin{align*}
    \Gamma_{\text{t-Katz}}^k = &\sum_{l=1}^k \alpha^l \sum_{j \in \T} (\tilde A ^l)_{\langle 1\rangle, \langle j\rangle}  \mathbb{1}_{V}\\
    \overset{\cref{lem:adjacency}}{=}&\sum_{l=1}^k \alpha^l \sum_{t_1\leq t_2 \dots \leq t_k} A^{[t_1]}A^{[t_2]}\dots A^{[t_k]}  \mathbb{1}_{V} \;\;=:Q_k \mathbb{1}_{V}
\end{align*}
For sufficiently small $\alpha$ we find that $\lim_{k\to \infty} Q_k = Q$ and thus $\Gamma_{\text{t-Katz}}^k\to\Gamma_{\text{t-Katz}}$.
Thus overall we find
$$\Gamma_{\text{t-Katz}}(G) = (\matr{1}_{V\times V}\; \dots \; \matr{0}_{V\times V}) \Gamma_{\text{Katz}}(\tilde G)$$

\begin{lemma}\label{lem:adjacency}
Let us denote with $A_{\langle i\rangle, \langle j\rangle}$ the $i,j$-th block of the matrix $A$ of size $V\,T \times V\,T$, i.e. the submatrix spanned by rows $(i-1)T +1$ through $i\, T $ and columns $(j-1)T +1$ through $j\,T $.
Then the powers of the adjacency matrix $\tilde A$ of causal completion graph $\tilde G$ of temporal graph $G$ have the following $V\times V$ blocks in the first block row
\begin{align*}
    (\tilde A ^k)_{\langle 1\rangle, \langle j\rangle} = \sum_{t_1\leq t_2 \dots \leq t_k=j} A^{[t_1]}A^{[t_2]}\dots A^{[t_k]}
\end{align*}
\end{lemma}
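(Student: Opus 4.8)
\textbf{Proof plan for Lemma~\ref{lem:adjacency}.}
The plan is to prove the block-structure formula by induction on $k$, the power of the adjacency matrix, using the explicit block form of $\tilde A = A_{\tilde G}$ given in \cref{eq:big_A}. The key observation is that the first block row of $\tilde A$ is $(A^{[t_1]}\;A^{[t_2]}\;\cdots\;A^{[t_T]})$, and more generally the $i$-th block row of $\tilde A$ is $(0\;\cdots\;0\;A^{[t_i]}\;A^{[t_{i+1]}}\;\cdots\;A^{[t_T]})$, i.e. the $(i,j)$-block of $\tilde A$ equals $A^{[t_j]}$ if $j\geq i$ and is the zero matrix otherwise.

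The base case $k=1$ is immediate from \cref{eq:big_A}: the $(1,j)$-block of $\tilde A$ is $A^{[t_j]}$, which matches the right-hand side since the only increasing sequence of length $1$ ending at $j$ is $(t_j)$ itself. For the induction step, I would write $\tilde A^{k} = \tilde A^{k-1}\,\tilde A$ and expand the $(1,j)$-block using block matrix multiplication:
\begin{align*}
    (\tilde A^{k})_{\langle 1\rangle,\langle j\rangle} = \sum_{m=1}^{T} (\tilde A^{k-1})_{\langle 1\rangle,\langle m\rangle}\,(\tilde A)_{\langle m\rangle,\langle j\rangle}.
\end{align*}
By the block structure of $\tilde A$ noted above, $(\tilde A)_{\langle m\rangle,\langle j\rangle} = A^{[t_j]}$ when $m\leq j$ and vanishes otherwise, so the sum collapses to $\sum_{m=1}^{j} (\tilde A^{k-1})_{\langle 1\rangle,\langle m\rangle}\,A^{[t_j]}$. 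Applying the induction hypothesis to each $(\tilde A^{k-1})_{\langle 1\rangle,\langle m\rangle}$ and then appending $t_j$ (with $t_m \leq t_j$ guaranteed since $m\leq j$) recombines the terms into exactly $\sum_{t_1\leq \cdots \leq t_{k-1}\leq t_j} A^{[t_1]}\cdots A^{[t_{k-1}]}A^{[t_j]}$, which is the claimed expression for power $k$ ending at index $j$.

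The main obstacle — though it is more bookkeeping than genuine difficulty — is keeping the index conventions consistent: one must be careful that the blocks are indexed by time-\emph{positions} $1,\dots,T$ while the formula is phrased in terms of the timestamps $t_1\leq\cdots\leq t_T$, and that the constraint $m\leq j$ in the matrix product corresponds precisely to the constraint $t_m\leq t_j$ in the sum (which uses that $t_1\leq t_2\leq\cdots\leq t_T$). Once this correspondence is made explicit, the collapse of the multiplication sum to a sum over non-decreasing sequences is routine, and the lemma follows. I would also briefly note that only the first block row is needed for the application to temporal Katz centrality, so the induction need only track that row, which keeps the argument short.
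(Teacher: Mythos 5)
Your proposal is correct and follows essentially the same route as the paper's proof: induction on $k$, using the block structure of $\tilde A$ from \cref{eq:big_A} so that the block-matrix product collapses to $\bigl(\sum_{m\leq j}(\tilde A^{k-1})_{\langle 1\rangle,\langle m\rangle}\bigr)A^{[t_j]}$, after which the induction hypothesis recombines the terms into the sum over non-decreasing sequences ending at $j$. Your explicit remark about matching block positions to timestamps is a useful clarification the paper leaves implicit, but the argument is the same.
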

\begin{proof}
Through induction on $k$. The base case $k=1$ is apparent from the definition.
Lets assume that the proposition holds for $k$.
If we now take this matrix to the power of $k+1$, we find $\tilde A ^{k+1}_{\langle 1\rangle, \langle j\rangle} = (\sum_{i\leq j} A^{k}_{\langle 1\rangle, \langle i\rangle})A_{[j]}$.
But the sum of the first j entries $\sum_{i\leq j} A^{k}_{\langle 1\rangle, \langle i\rangle}$ are by our induction hypothesis all sorted products of $k$ matrices whose latest index does not exceed $j$, i.e. 
\begin{align}
    \sum_{i\leq j} A^{k}_{\langle 1\rangle, \langle i\rangle}=\sum_{t_1\leq t_2 \dots \leq t_k\leq j} A^{[t_1]}A^{[t_2]}\dots A^{[t_k]}
    \label{eq:A_helper1}
\end{align}
This means for the entries we have
\begin{align*}
    \tilde A ^{k+1}_{1 j}&=\left( \sum_{t_1\leq t_2 \dots \leq t_k\leq j} A^{[t_1]}A^{[t_2]}\dots A^{[t_k]} \right) A^{[j]}\\
    &=  \sum_{t_1\leq t_2 \dots \leq t_{k+1}=j} A^{[t_1]}A^{[t_2]}\dots A^{[t_{k+1}]}
\end{align*}
Which concludes the induction.
\end{proof}

\subsection{Additions to Experiments of Section~\ref{sec:experiments}}\label{sec:app_exp}
Note that all randomizing procedures preserve the number of edges per time.
DSS corresponds to t-NeSt($0$) which is the classical configuration model applied on the static graphs within each time slices (i.e \textbf{in-time swaps} if $G$ is undirected and \textbf{in-time tilts} if $G$ is directed).

RT and RC both preserve the topology (i.e. the edges) of the underlying static graph, but RT additionally preserves the multiplicity of the edges in the aggregated graph.

RE is the most relaxed procedure since it allows for any \textbf{swaps} or \textbf{tilts}, i.e. for undirected graphs, pick randomly two temporal edges $(a,b,t)$ and $(c,d,t')$ and with probability $1/2$ the edges become ($(a,c,t)$ and $(b,d,t')$) or ($(a,d,t)$ and $(b,c,t')$).
For directed graphs, pick up an edge $(a,b,t)$, choose uniformly any time $t'$, and any node $c$, replace $(a,b,t)$ with $(a,c,t')$.
For the t-NeSt and the RE procedure we made the same number of tentative rewirings.

The datasets used are available publicly at~\cite{kon},~\cite{nr} and~\cite{snap}.
\emph{All values shown here are averages over 10 samples} and the standard deviations are shown in brackets and apply to the last shown digit.

\begin{Definition}[Edge persistence~\cite{nicosia2013graph}]\label{def:pers}
For temporal graph $G$ the edge persistence is defined as:
    \[ C(G) = \frac{1}{E} \sum_{i \in \mathcal V}  \sum_{\ell = 1}^{T-1} \frac{\sum_{j} A^{[t_{\ell}]}_{ij} A^{[t_{\ell+1}]}_{ij} }{\sqrt{ (\sum_{j}A^{[t_{\ell}]}_{ij} ) (\sum_{j}A^{[t_{\ell+1}]}_{ij} ) }}, \]
where $E$ is the number of temporal edges. In~\cite{nicosia2013graph}, the authors normalized by $(V\, (T-1))$ while we normalized by $E$ which is more suitable for very sparse temporal graphs.  
\end{Definition}

As the name edge persistence already indicates it measures how many edges persist from one temporal slice to the next one.
Naturally for sparse graphs where there are very few edges per time and many nodes, one would expect a low edge persistence while for graphs with many edges per slice the persistence could be expected to be higher.

\begin{Definition}[Triangles~\cite{triangle}]\label{def:tri}
Let $G$ be a temporal graph, then a triangle corresponds to a set $ s = \{ e_1,e_2,e_3\}$ where $e_1 = ((a,b),t_1), e_2 = ((b,c),t_2)$ and $e_3 = ((c,a),t_3)$ for $a,b,c \in V$ and $t_1,t_2,t_3 \in \T$. This corresponds to shapes $\mathcal{T}_4$ or $\mathcal{T}_8$ of~\cite{triangle}. Then we define $\Delta(G) = { \# \mathrm{triangles}}/(VT)$.  
\end{Definition}

Triangles and are well known to be a very relevant characteristic of static (social) networks, as they are a result of a community structure or potentially of a kind of triadic closure effect.
Similarly, for temporal networks a large number of triangles may hint at a hidden community structure or at processes hat favor triangles, such as emails going in short circles or spacial proximity resulting accidental triangular meetups over time.

\begin{Definition}[Causal Triangles~\cite{triangle}]\label{def:cau_tri}
Let $G$ be a temporal graph, then a causal triangle corresponds to a set $ s = \{ e_1,e_2,e_3\}$ where $e_1 = ((a,b),t_1), e_2 = ((b,c),t_2)$ and $e_3 = ((c,a),t_3)$ for $a,b,c \in V$ and $t_1,t_2,t_3 \in \T$ such that $t_1 < t_2 < t_3$.
This corresponds roughly to the shape $\mathcal{T}_8$ of~\cite{triangle}.
Then we define $\Vec{\Delta}(G) = { \# \mathrm{causal triangles}}/(VT)$.  
\end{Definition}

While the total number of triangles is an interesting and frequently looked at measure, it may be computed from the aggregated graph with its multiplicities.
But the temporal ordering of events is lost in the aggregated graph, it thus makes sense for temporal networks to not only look at temporal triangles but also at causal triangles.

\begin{table}[th]
\resizebox{1.0\columnwidth}{!}{
\begin{tabular}{l|l|llllll}
\toprule
abbr & origin. & t-Nest($\infty$) & t-Nest($1$) & DSS & RE & RT & RC \\
\midrule
ht & 73.2 & $\mathbf{73.20(0)}$ & $73.20(0)$ & $38(2)$ & $38.60(3)$ & $80.00(4)$ & $14.40(2)$ \\
wea & 0 & $\mathbf{0}$ & $\mathbf{0}$ & $\mathbf{0}$ & $0.003(1)$ & $0.088(3)$ & $0.088(3)$ \\
wp & 19.2 & $\mathbf{19.2}$ & $\mathbf{19.2}$ & $12(3)$ & $7.60(1)$ & $23.80(3)$ & $4.80(6)$ \\
pri & 291 & $-$ & $\mathbf{291(0)}$ & $294(2)$ & $294(2)$ & $272(8)$ & $202(6)$ \\
rac & 502 & $\mathbf{502(0)}$ & $\mathbf{502(0)}$ & $523(3)$ & $525(3)$ & $543(2)$ & $358(4)$ \\
dnc & 0.972 & $1.31(0)$ & $1.25(1)$ & $0$ & $0$ & $\mathbf{1.18(0)}$ & $0.023(1)$ \\
fb & 0.001 & $\mathbf{0.001(0)}$ & $0.001(0)$ & $0.00(0)$ & $0.00(0)$ & $0.001(0)$ & $0.001(0)$ \\
eu2 & 147 & $171(0)$ & $\mathbf{163(0)}$ & $0.67(2)$ & $0.96(1)$ & $191(1)$ & $13.70(1)$ \\
eu3 & 4.6 & $\mathbf{4.69(1)}$ & $4.12(5)$ & $0.48(2)$ & $0.70(2)$ & $6.10(2)$ & $1.99(3)$ \\
eo & 0.076 & $\mathbf{0.076(0)}$ & $0.078(1)$ & $0$ & $0$ & $0.110(0)$ & $0.001(0)$ \\
\bottomrule
\end{tabular}
}
\caption{Empirically determined values for the normalized number of causal triangles $\Vec{\Delta}$.
We find that in many networks the number of causal triangles is well preserved by the temporal NeSt procedure.
Only on the dnc network the Random Times (RT) procedure better preserves the number of causal triangles.
}
\label{tab::cau_trian}
\end{table}

\begin{Definition}[Burstiness~\cite{holme2012temporal}]\label{def:burs}
Let the $i$-th time that nodes $u$ is active as $\T^{\text{active}}_i(u)$, and the number of times node $u$ was active as $T(u)$.
Similar in the directed case we distinguish the $i$-th time a node is sending/receiving as $\T^{\text{send}}_i(u)$ / $\T^{\text{receive}}_i(u)$ and similarly the number of times a node is sending/receiving as $T^{\text{send}}(u)$ or $T^{\text{receive}}(u)$ respectively.

Then for a graph $G$ the multi-set of inter arrival times is defined as $\tau^X_G:= \bigcup_{u \in \V(G)}\lmulti \T^X_{i+1}(u)-\T^X_i(u) | 1\leq i < T^{X}(u)\rmulti$ (the additive union is used) for superscript $X\in \{\text{active},\text{send}, \text{receive}\}$.
Then the active/send/receive burstiness is given as
    $$B^X(G) = \dfrac{\sigma_{\tau^X_G} / m_{\tau^X_G} -1}{\sigma_{\tau^X_G} / m_{\tau^X_G} +1} = \dfrac{\sigma_{\tau^X_G}-m_{\tau^X_G}}{\sigma_{\tau^X_G} + m_{\tau^X_G}},$$
    where $m_{\tau^X_G}$ is the mean of inter contact times $\tau^X_G$ and $\sigma_{\tau^X_G}$ its standard deviation.
\end{Definition}

The burstiness allows to asses the uniformity of the temporal process both as a function of the nodes and as a function of time itself.
If the temporal activity for each node approximately follows a Poisson process with parameter $\lambda$ uniform across all nodes and uniform across time then the burstiness is approximately zero $B\approx 0$.
If the average inter contact time is large compared to the standard deviation, then the process is extremely uniform and the burstiness is close to minus one.
All the networks seen here show quite some variation in their inter contact times as the burstiness is usually at least 0.5 which implies the standard deviation is at least three times larger than the mean.

\begin{table}[th]

\resizebox{1.0\columnwidth}{!}{
\begin{tabular}{l|l|llllll}
\toprule
abbr & origin. & t-Nest($\infty$) & t-Nest($1$) & DSS & RE & RT & RC \\
\midrule
ht & 0.795 & $\mathbf{0.795}$ & $\mathbf{0.795}$ & $\mathbf{0.795}$ & $0.717(2)$ & $0.701(3)$ & $0.690(3)$ \\
wea & 0.823 & $\mathbf{0.823}$ & $\mathbf{0.823}$ & $\mathbf{0.823}$ & $0.048(9)$ & $0.031(8)$ & $0.031(8)$ \\
wp & 0.733 & $\mathbf{0.733}$ & $\mathbf{0.733}$ & $\mathbf{0.733}$ & $0.631(2)$ & $0.629(2)$ & $0.611(1)$ \\
pri & 0.571 & $\mathbf{0.571}$ & $\mathbf{0.571}$ & $\mathbf{0.571}$ & $0.435(6)$ & $0.84(10)$ & $\mathbf{0.5(2)}$ \\
rac & 0.458 & $-$ & $\mathbf{0.458}$ & $\mathbf{0.458}$ & $0.36(2)$ & $0.33(1)$ & $0.252(5)$ \\
dnc & 0.855 & $0.867(1)$ & $\mathbf{0.848(2)}$ & $0.728(1)$ & $0.696(3)$ & $0.779(6)$ & $0.723(6)$ \\
fb & 0.613 & $\mathbf{0.612(1)}$ & $\mathbf{0.612(1)}$ & $0.415(2)$ & $0.380(3)$ & $0.528(5)$ & $0.463(6)$ \\
eu2 & 0.811 & $\mathbf{0.812(1)}$ & $0.815(2)$ & $0.780(1)$ & $0.780(0)$ & $0.798(2)$ & $0.780(1)$ \\
eu3 & 0.735 & $\mathbf{0.736(0)}$ & $\mathbf{0.735(2)}$ & $0.698(1)$ & $0.696(1)$ & $0.707(3)$ & $0.702(2)$ \\
eo & 0.726 & $\mathbf{0.748(1)}$ & $0.617(1)$ & $0.274(1)$ & $0.141(1)$ & $0.511(2)$ & $0.360(1)$ \\

\bottomrule
\end{tabular}
}
\caption{Empirically determined values for the active burstiness $B^{\text{active}}$.
}
\label{tab::B_active}
\end{table}

The result for the active burstiness is shown in \Cref{tab::B_active}.
Naturally in the undirected case the active burstiness $B^{\text{active}}$ is preserved exactly by the t-NeSt procedure and as DSS is just t-NeSt(0), DSS preserves it exactly as well.
In the directed case it is not guaranteed that the active burstiness is preserved well.
Still, the NeSt procedure better preserve the burstiness than other randomization methods.

\begin{table}[th]
\resizebox{1.0\columnwidth}{!}{
\begin{tabular}{llllllll}
\toprule
abbr & origin. & t-Nest($\infty$) & t-Nest($1$) & DSS & RE & RT & RC \\
\midrule
dnc & 0.862 & $\mathbf{0.862}$ & $\mathbf{0.862}$ & $\mathbf{0.862}$ & $0.621(4)$ & $0.759(4)$ & $0.708(6)$ \\
fb & 0.631 &  $\mathbf{0.631}$ &  $\mathbf{0.631}$ & $\mathbf{0.631}$ & $0.270(3)$ & $0.510(5)$ & $0.423(6)$ \\
eu2 & 0.758 & $\mathbf{0.758}$ & $\mathbf{0.758}$ & $\mathbf{0.758}$ & $0.699(1)$ & $0.739(3)$ & $0.726(2)$ \\
eu3 & 0.666 & $\mathbf{0.666}$ & $\mathbf{0.666}$ & $\mathbf{0.666}$ & $0.585(3)$ & $0.628(3)$ & $0.611(4)$ \\
eo & 0.746 & $\mathbf{0.746}$ & $\mathbf{0.746}$ & $\mathbf{0.746}$ & $0(2)$ & $0.596(5)$ & $0.568(3)$ \\

\bottomrule
\end{tabular}
}
\caption{Empirically determined values for the send burstiness $B^{\text{send}}$.
}
\label{tab::B_send}
\end{table}

The empirical results for the send burstiness is shown in \Cref{tab::B_send}.
As the out-degree of each temporal node in each slice is preserved exactly by the NeSt and DSS procedure thus the send burstiness is preserved exactly them as well.
The margin by which the send burstiness is better preserved by the NeSt networks is usually at least 0.1 except for the eu2 and eu3 network for which the RT method also preserves the bustiness relatively well.

\begin{table}[th]
\resizebox{1.0\columnwidth}{!}{
\begin{tabular}{l|l|llllll}
\toprule
abbr & origin. & t-Nest($\infty$) & t-Nest($1$) & DSS & RE & RT & RC \\
\midrule
dnc & 0.821 & $0.844(0)$ & $\mathbf{0.820(3)}$ & $0.637(2)$ & $0.755(6)$ & $0.752(8)$ & $0.685(7)$ \\
fb & 0.537 & $\mathbf{0.532(4)}$ & $\mathbf{0.547(5)}$ & $0.270(4)$ & $0.494(6)$ & $0.497(6)$ & $0.449(5)$ \\
eu2 & 0.741 & $\mathbf{0.742(1)}$ & $0.746(3)$ & $0.698(1)$ & $0.727(3)$ & $0.727(3)$ & $0.705(2)$ \\
eu3 & 0.646 & $\mathbf{0.647(1)}$ & $\mathbf{0.644(4)}$ & $0.591(2)$ & $0.604(4)$ & $0.608(3)$ & $0.596(5)$ \\
eo & 0.627 & $\mathbf{0.664(1)}$ & $0.495(1)$ & $0(2)$ & $0.370(2)$ & $0.369(2)$ & $0.144(2)$ \\

\bottomrule
\end{tabular}
}
\caption{Empirically determined values for the receive burstiness $B^{\text{receive}}$.}
\label{tab::B_receive}
\end{table}

The values for the receive burstiness are shown in \Cref{tab::B_receive}.
Unlike the send burstiness is this not guaranteed to be well preserved by the t-NeSt kind of models.
We find that the NeSt models preserve the receive burstiness in four out of five networks up to the first two digits.
For the eu2 network the other randomization schemes result in a receive burstiness which is only 3\% away from the original value, while in the other cases the relative difference is about 10\%.



\end{document}